\documentclass[11pt,reqno]{amsart}

\usepackage{amsfonts, latexsym, amssymb, amsmath, fullpage, bbm,color}

\title[title]{Strong Divergence of Reconstruction Procedures for the Paley-Wiener Space $\mathcal{PW}^1_\pi$ 
and the Hardy Space $\mathcal{H}^1$}
\author{Holger Boche}
\address{Lehrstuhl f\"ur Theoretische Informationstechnik, Technische Universit\"at M\"unchen,
Arcisstrasse 21, D080290, M\"unchen, Germany}
\email{boche@tum.de}
\author{Brendan Farrell}
\address{Computing and Mathematical Sciences, California Institute of Technology, Pasadena, CA 91125, U.S.A. }
\email{farrell@cms.caltech.edu}
\date{\today}
\frenchspacing

\def\R {\mathbb{R}}
\def\Z {\mathbb{Z}}
\def\N {\mathbb{N}}
\def\C {\mathbb{C}}

\def\PWpio {\mathcal{PW}^1_{\pi}}
\def\PWpit {\mathcal{PW}^2_{\pi}}
\def\PWpip {\mathcal{PW}^p_{\pi}}
\def\fo {f_1}
\def\sinpitmk {\frac{ \sin \pi(t-k)}{\pi (t-k)}}
\def\Nph {N+\frac{1}{2}}
\def\sinpiNphmk {\frac{ \sin \pi(\Nph-k)}{\pi (\Nph-k)}}
\def\wN {w_N}
\def\gh {\hat{g}}
\def\eqD {\stackrel{\mathcal{D}}{=}}
\def\sumbiinf {\sum_{k=-\infty}^\infty}
\def\sumbiN {\sum_{k=-N}^N}
\def\fo {f_1}
\def\fho {\hat{f}_1}
\def\oh {\frac{1}{2}}
\def\otwopi {\frac{1}{2\pi}}
\def\inttwopi {\int_{-\pi}^{\pi}}
\def\Fo {F_1}
\def\eio {e^{i\omega}}
\def\eiok {e^{i\omega k}}
\def\foh {\hat{f}_1}
\def\opi {\frac{1}{\pi}}
\def\kh {\hat{k}}
\def\phik {\phi_k}
\def\Nh {\hat{N}}
\def\go {g_1}
\def\Go {G_1}
\def\tn {t_n}
\def\dn {d_n}
\def\thN {\hat{t}_N}
\def\th {\hat{t}}
\def\tN {t_N}
\def\tk {t_k}
\def\kt {\tilde{k}}

\def\fh {\hat{f}}
\def\sinpitmk {\frac{\sin \pi(t-k)}{\pi(t-k)}}
\def\fo {f_1}
\def\pk {\phi_k}
\def\Ho {\mathcal{H}^1}
\def\F {\mathcal{F}}
\def\Nol {N^1_l}
\def\No {N^1}
\def\Nl {N_l}
\def\Nlr {N_{l_r}}
\def\qh {\hat{q}}
\def\sumroi {\sum_{r=1}^\infty}
\def\Nlm {N_{l_m}}
\def\sumNlm {\sum_{k=0}^{N_{l_m}}}
\def\sintmmk {\frac{\sin \pi(t_m-k)}{\pi(t_m-k)}}
\def\qo {q^{1}}
\def\sinNlm {\frac{\sin\pi(\Nlm+\oh-k)}{\pi(\Nlm+\oh-k)}}
\def\PW {\mathcal{PW}}
\def\eiomega {e^{i\omega}}
\def\phiz {\frac{\phi(z)}{\phi'(\tk)(z-\tk)}}

\def\tNs {t_N^*}
\def\tNh {\hat{t}_N}
\def\IN {I_N}
\def\sumIN {\sum_{k\in\IN}}
\def\deltatop {\overline{\delta}}
\def\deltab {\underline{\delta}}

\theoremstyle{definition}
\newtheorem{theorem}[subsection]{Theorem}

\newtheorem{lemma}[subsection]{Lemma}
\newtheorem{corollary}[subsection]{Corollary}
\newtheorem{definition}[subsection]{Definition}
\newtheorem{conjecture}[subsection]{Conjecture}
\newtheorem{remark}[subsection]{Remark}
\newtheorem{question}[subsection]{Question}

\begin{document}

\begin{abstract}  
Previous results on certain sampling series have left open if divergence only occurs for certain 
subsequences or, in fact, in the limit. 
Here we prove that divergence occurs in the limit. 

We consider three canonical reconstruction methods for functions in the Paley-Wiener space $\PWpio$. 
For each of these we prove an instance when the reconstruction diverges in the limit. 
This is a much stronger statement than previous results that provide only $\limsup$ divergence. 
We also address reconstruction for functions in the Hardy space $\Ho$ 
and show that for any subsequence of the natural numbers there exists a function in $\Ho$ for which 
reconstruction diverges in $\limsup$. 
For two of these sampling series we show that when divergence occurs, the sampling series has strong 
oscillations so that the maximum and the minimum tend to positive and negative infinity. 
Our results are of interest in functional analysis because they go beyond the type of result that 
can be obtained using the Banach-Steinhaus Theorem. 
We discuss practical implications of this work; in particular the work shows that methods using 
specially chosen subsequences of reconstructions cannot yield convergence for 
the Paley-Wiener Space $\mathcal{PW}^1_\pi$
\end{abstract}

\maketitle
%\make

AMS Subject Classification: 94A20, 94A12

\vspace{.5cm}

Keywords: Sampling series, Reconstruction, Paley-Wiener space, Hardy space

\section{Introduction}

\vspace{.2cm}

Sampling theory originated with the study of 
reconstruction of a function in terms of its samples;  
the fundamental initial results of the theory state conditions on a function $f:\R\rightarrow\C$ for the expansion 
\begin{equation*}\label{eq:start}
f(t)=\sumbiinf f(k)\sinpitmk
\end{equation*}
to be justified and for equality to hold. 
Determining the appropriate function classes, the regions of convergence and proper ways to address 
the infinite series now constitute an entire area of research. 
The present paper addresses the pointwise approximation of sampling series for several standard 
reconstruction procedures for functions in the Paley-Wiener and Hardy spaces. 
We present four main results. 
Three of these (Theorems~\ref{thm:one},~\ref{thm:two} and~\ref{thm:three}) show stronger forms of divergence 
than were previously known for three different reconstruction procedures for functions in the 
Paley-Wiener space $\PWpio$; and 
the fourth (Theorem~\ref{thm:four}) is the first divergence result for a reconstruction procedure for 
the Hardy space $\Ho$. 
While the first three Theorems just mentioned concern strong divergence, the last holds for $\limsup$ divergence.

We now introduce the necessary definitions for a discussion of our results. 
$D$ denotes the unit disc $D:=\{z\in\C:\; |z|<1\}$, and $\partial D$ denotes its boundary. 
For $\sigma>0$ and $1\leq p<\infty$ the Paley-Wiener space  $\PW^p_\sigma$ is the set of all 
functions $f$ that can be represented as $f(z)=\otwopi\int_{-\sigma}^{\sigma}g(\eiomega)e^{i\omega z}d\omega$, 
for all $z\in\C$ and some $g\in L^p(\partial D)$. 
The norm on $\PW^p_\sigma$ is 
\begin{equation}
\|f\|_{\PW^p_\sigma}=\left(\otwopi\int_{-\sigma}^{\sigma}|\fh(\omega)|^pd\omega\right)^{1/p} = 
\left(\otwopi\int_{-\sigma}^{\sigma}|g(\eio)|^pd\omega\right)^{1/p},
\end{equation}
where the second equality is due to the identity $\fh(\omega)=g(\eio)$ for $\fh$ being the Fourier transform of $f$.  

For a function $f\in\PWpip$, $1\leq p<\infty$, one may consider the Shannon sampling series 
\begin{equation}\label{eq:shannon}
\sumbiinf f(k)\sinpitmk.
\end{equation}
For $1<p<\infty$,~\eqref{eq:shannon} converges absolutely and uniformly on all $\R$~\cite{BSS88}. 
For $p=1$, one has the following theorem of Brown~\cite{Bro67}.
\begin{theorem}[Brown~\cite{Bro67}]
For all $f\in\PWpio$ and all $T>0$, 
\begin{equation*}
\lim_{N\rightarrow\infty}\left( \max_{t\in[-T,T]}\left|f(t)-\sumbiN f(k)\sinpitmk\right|\right)=0.
\end{equation*}
\end{theorem}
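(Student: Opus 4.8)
\medskip

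The plan is to realise the truncation error $f(t)-\sumbiN f(k)\sinpitmk$ as a contour integral via the residue calculus, and then to force this integral to $0$ uniformly for $t\in[-T,T]$ by exploiting the absolute continuity of $\omega\mapsto\int^{\omega}|\fh|$ at the endpoints $\omega=\pm\pi$. The crude bound $|f(z)|\le\|f\|_{\PWpio}\,e^{\pi|\Im z|}$, coming straight from $f(z)=\otwopi\inttwopi\fh(\omega)e^{i\omega z}\,d\omega$, is only enough to make the contour integral bounded; the extra input needed to make it vanish is that the mass of $\fh$ near $\pm\pi$ is small.

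First I would set $\phi(z)=\frac{\sin\pi z}{\pi}$, an entire function whose zeros are exactly the integers, all simple, with $\phi'(k)=(-1)^{k}$. For $t\in[-T,T]\setminus\Z$ and $N$ large enough that $N+\frac12>T$, consider
\[
g(z)=\frac{f(z)\,\phi(t)}{\phi(z)\,(z-t)}
\]
together with the positively oriented square contour $\Gamma_N=\partial\{\,z:|\Re z|\le N+\tfrac12,\ |\Im z|\le N+\tfrac12\,\}$. Inside $\Gamma_N$ the function $g$ has a simple pole at $z=t$ with residue $f(t)$, and simple poles at $z=k$ with $|k|\le N$ and residue $-f(k)\sinpitmk$ (using $1/\phi'(k)=(-1)^k$ and $(-1)^k\sin\pi t=\sin\pi(t-k)$). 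Hence the residue theorem gives
\[
f(t)-\sumbiN f(k)\sinpitmk=\frac{1}{2\pi i}\oint_{\Gamma_N}g(z)\,dz .
\]
Since $t\mapsto f(t)-\sumbiN f(k)\sinpitmk$ is continuous, it suffices to bound the right-hand side for $t\notin\Z$; at integers $t$ the truncated series already equals $f(t)$ once $N\ge|t|$.

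Then I would estimate the integral using two ingredients. For the denominator: on the vertical sides of $\Gamma_N$ one has $|\sin\pi z|=\cosh(\pi\Im z)$ and on the horizontal sides $|\sin\pi z|\ge\sinh\big(\pi(N+\tfrac12)\big)$, so $|\phi(z)|\ge c\,e^{\pi|\Im z|}$ on $\Gamma_N$ for an absolute $c>0$ and all large $N$. For the numerator, splitting $f(z)=\otwopi\inttwopi\fh(\omega)e^{i\omega z}\,d\omega$ at $\omega=-\pi+\delta$ when $\Im z>0$ and at $\omega=\pi-\delta$ when $\Im z<0$ gives, for every $\delta\in(0,2\pi)$,
\[
|f(x+iy)|\le\otwopi\Big(\varepsilon_\delta\,e^{\pi|y|}+\|\fh\|_{L^1}\,e^{(\pi-\delta)|y|}\Big),
\qquad
\varepsilon_\delta:=\int_{-\pi}^{-\pi+\delta}|\fh|+\int_{\pi-\delta}^{\pi}|\fh| .
\]
Combining, $|f(z)/\phi(z)|\le C\big(\varepsilon_\delta+\|\fh\|_{L^1}e^{-\delta|\Im z|}\big)$ on $\Gamma_N$; since $|\phi(t)|\le\frac1\pi$ and $|z-t|\ge N+\tfrac12-T$ there, the perimeter of $\Gamma_N$ is $8N+4$, and $\oint_{\Gamma_N}e^{-\delta|\Im z|}\,|dz|$ stays bounded as $N\to\infty$ (in fact $\le 4/\delta+o(1)$), one obtains
\[
\Big|f(t)-\sumbiN f(k)\sinpitmk\Big|\le C'\,\varepsilon_\delta+o(1)\qquad(N\to\infty),
\]
with $C'$ independent of $t\in[-T,T]$ and of $\delta$. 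Thus $\limsup_{N\to\infty}\ \max_{t\in[-T,T]}\big|f(t)-\sumbiN f(k)\sinpitmk\big|\le C'\varepsilon_\delta$ for every $\delta>0$, and letting $\delta\downarrow0$, so that $\varepsilon_\delta\to0$ by absolute continuity of $\omega\mapsto\int^{\omega}|\fh|$, forces this $\limsup$ to be $0$, which is the assertion.

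The hard part will be exactly the refined growth estimate and the upgrade from $O(1)$ to $o(1)$ for the contour integral: with the crude bound alone one only gets that $\oint_{\Gamma_N}g$ is bounded, i.e.\ that the truncation operators are uniformly bounded on $\PWpio$, and it is the vanishing of $\varepsilon_\delta$ that turns uniform boundedness into convergence --- the feature distinguishing $\PWpio$ from the larger space of bounded continuous functions bandlimited to $[-\pi,\pi]$. (Alternatively, one can first deduce $\sup_N\|S_N\|_{\PWpio\to C[-T,T]}<\infty$ from the crude estimate and then combine it with convergence on the dense subspace $\{f:\ \fh\in C_c^\infty(-\pi,\pi)\}$, where $\{f(k)\}\in\ell^1$ and the series converges absolutely and uniformly on $\R$; the contour argument has the advantage of being quantitative.)
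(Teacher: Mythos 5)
Your argument is correct and complete: the residue computation for $g(z)=f(z)\phi(t)/(\phi(z)(z-t))$ on the square $\Gamma_N$ is right (residue $f(t)$ at $z=t$, residue $-f(k)\sinpitmk$ at $z=k$), the lower bound $|\phi(z)|\geq c\,e^{\pi|\Im z|}$ on $\Gamma_N$ is valid on both the vertical sides (where $|\sin\pi z|=\cosh(\pi\Im z)$) and the horizontal ones, and the split of $\fh$ into a small-mass piece near $\omega=\pm\pi$ contributing $O(\varepsilon_\delta)$ and a piece of strictly smaller exponential type contributing $o(1)$ is exactly the extra input needed to pass from uniform boundedness of the truncation operators to convergence; the reduction of $t\in\Z$ to the continuity of the error and the uniformity in $t\in[-T,T]$ via $|z-t|\geq N+\tfrac12-T$ are also handled properly. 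Note, however, that there is nothing in the paper to compare this against: the statement is imported verbatim from Brown \cite{Bro67} and the authors give no proof, using it only as background for the divergence results that follow. Your proof is in the classical spirit of the arguments for this theorem in the sampling literature (contour-integral representation of the truncation error plus absolute continuity of $\int|\fh|$ at the band edges), and your closing remark correctly identifies why the result is specific to $\PWpio$ rather than to all bounded bandlimited functions; the alternative route you sketch (uniform boundedness on $[-T,T]$ plus convergence on a dense subclass with $\fh$ supported away from $\pm\pi$) is equally valid and is the standard soft version of the same proof.
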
 

The study of function representations using their samples is now its own field. 
Historical articles such as~\cite{BDFHLSS} describe the origins of this field in the 1930's and 
1940's and point out that a  number of authors made groundbreaking contributions in that era, 
but, as dissemination was limited, many were not adequately recognized by the historical record until the 
last several decades. 
Yet, despite decades of research, there remain interesting and important open questions about the 
convergence of sampling series. 
Originally research was focused on $\PWpit$, but in recent years efforts have been made to investigate larger signal spaces. 
About twenty years ago, Butzer~\cite{But93} 
suggested the study of convergence for reconstruction of elements from the function class $\PWpio$.
In particular, he suggested studying the quantity
\begin{equation}
\max_{t\in\R}\left|f(t)-\sum_{k=-N}^Nf(k)\sinpitmk\right|\label{eq:Butzer2}
\end{equation}
as $N$ tends to infinity. 
In~\cite{BM08} it is shown that there exists a function $\fo\in \PWpio$ such that
\begin{equation}
\limsup_{N\rightarrow\infty}\left(\max_{t\in\R}\left|\fo(t)-\sum_{k=-N}^N\fo(k)\sinpitmk\right|\right)=+\infty.\label{eq:BM1}
\end{equation}
In fact, in~\cite{BM08} it is shown that a function leading to the divergence behavior in~\eqref{eq:BM1} exists 
for any reconstruction process that relies primarily on integer sampling points. 
An important instance of the general result proved in~\cite{BM08} is for the Valiron interpolation series
\begin{equation}\label{eq:Valiron}
f(t)=f(t_0)\frac{\sin \pi t}{\sin\pi t_0}+(t-t_0)\sumbiinf \frac{f(k)}{k-t_0} \sinpitmk,
\end{equation}
where $t_0\in\R\backslash \Z$. 
Divergence is shown in~\cite{BM08} for this expansion  among others. 

\begin{remark}\label{remark:Valiron}
A significant difference between the Valiron and the Shannon series is that the Valiron series~\eqref{eq:Valiron}  
is absolutely convergent for every fixed $t$. 
We return to the Valiron series in Corollary~\ref{cor:Valiron}. 
\end{remark}

In~\cite{BM08} and other works the divergence given in~\eqref{eq:BM1} is only given in terms of the $\limsup$. 
Thus, two very natural questions remain:
\begin{itemize}
\item[Q1)] Does there exist a universal sequence $\{N_l\}_{l\in\N}$ such that for all $f\in \PWpio$, 
\begin{equation}\label{eq:Q1}
\sup_{l}\left(\max_{t\in\R}\left|f(t)-\sum_{k=-N_l}^{N_l}f(k)\sinpitmk\right|\right)<\infty
\end{equation}
holds?
\item[]
\item[Q2)] If the universal sequence of Q1) does not exist, does there exist a specific sequence $\{N_l\}_{l\in\N}$ 
for each $f\in\PWpio$ such that~\eqref{eq:Q1} holds. 
\end{itemize}

\begin{remark}
Note that a negative answer to Q2) for $\PWpio$  implies a negative answer to Q1).  
\end{remark}
A negative answer to both of these questions is given by Theorem~\ref{thm:one} of the next section. 
In Section~\ref{sec:four} we address reconstruction of functions in the Hardy space $\Ho$. 
Theorem~\ref{thm:four} provides a divergence behavior that, due to a classical result on convergence of 
lacunary series, is unexpected and also provides a negative answer to Q1) for $\Ho$.  

In Section~\ref{sec:two} we return to $\PWpio$ and address reconstruction for functions of sine type. 
In Section~\ref{sec:three}, we address reconstruction of a function using the interpolating sequence 
that it generates. 
Lastly, in Section~\ref{sec:diverge} we show that both the maximum and minimum of the reconstruction 
procedures addressed in Sections~\ref{sec:two} and~\ref{sec:three} diverge strongly. 

We say that $\{a_k\}_{k\in\N}$ diverges \emph{strongly} if 
$\lim_{k\rightarrow\infty}a_k=+\infty$ or $\lim_{k\rightarrow\infty}a_k=-\infty$, 
which is stronger statement than $\limsup_{k\rightarrow\infty}a_k=+\infty$ or  
$\liminf_{k\rightarrow\infty} a_k=-\infty$.  
Theorems~\ref{thm:one}, \ref{thm:two} and~\ref{thm:three} provide strong divergence statements where 
previously only divergence statements were available. 
We emphasize that this is significant because it rules out the possibility that divergence occurs only as a 
result of a strongly divergent subsequence. 
This has obvious practical implications, and after the statement of Theorem~\ref{thm:one} we discuss 
the mathematical significance as well. 
Lastly, before turning to our results, we suggest to the reader the interesting section on ``Analogue 
Signal Transmission'' in Feynman's book~\cite{Fey98}. 
This chapter sets fundamental questions from sampling theory in the wider context of the physics of computation. 

\section{Strong Divergence for Reconstruction of Signals in $\PWpio$}\label{sec:one}

\begin{theorem}\label{thm:one}
There exists a function $\fo\in \PWpio$ such that
\begin{equation}\label{eq:diverge}
\lim_{N\rightarrow\infty}\left(\max_{t\in\R} \left| \sum_{k=-N}^N \fo(k)\sinpitmk \right|\right)=+\infty.
\end{equation}
\end{theorem}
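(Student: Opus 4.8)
The plan is to exploit the known $\limsup$ divergence result from~\cite{BM08} together with a Baire-category / gliding-hump argument, but upgraded so that divergence happens along the full sequence rather than a subsequence. The starting point is the observation that the linear functionals
\[
\Phi_N(f):=\max_{t\in\R}\Bigl|\sumbiN f(k)\sinpitmk\Bigr|
\]
are not what we want to bound below directly; instead I would pick, for each $N$, a good evaluation point $t_N$ (morally $t_N=\Nph$, since $\sinpiNphmk$ decays only like $1/|N+\oh-k|$, i.e.\ not absolutely summable) and study the bounded linear functionals $\psi_N(f):=\sumbiN f(k)\,\sinpiNphmk$ on $\PWpio$. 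A Nikolskii-type inequality or direct estimate shows $\|\psi_N\|_{(\PWpio)^*}\asymp \log N\to\infty$, which by Banach--Steinhaus already gives $\limsup_N|\psi_N(f)|=\infty$ for a residual set of $f$; the real work is to force the \emph{limit} to be $+\infty$.

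The key step is an explicit construction rather than an abstract one. I would build $\fo=\sum_{r=1}^\infty c_r g_r$ where each $g_r\in\PWpio$ is a normalized ``bump'' engineered so that: (i) $\|g_r\|_{\PWpio}$ is small enough that $\sum_r c_r\|g_r\|_{\PWpio}<\infty$ (so $\fo\in\PWpio$); (ii) $g_r$ is supported in frequency/indexed so that for a long block of consecutive values of $N$, say $N\in[N_{l_r},N_{l_{r+1}})$, the partial sum $\sumbiN g_r(k)\sinpiNphmk$ is uniformly large — of order $\log N_{l_r}$ — with a definite sign; and (iii) the tails, i.e.\ the contributions of $g_{r'}$ for $r'\neq r$ to the $N$-th partial sum, are uniformly bounded (for $r'<r$ because those bumps have ``finished'' and contribute a convergent Shannon series by Brown's theorem, and for $r'>r$ because $c_{r'}$ is chosen to decay fast). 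Choosing $c_r$ and the lacunary sequence $\{N_l\}$ by the usual gliding-hump bookkeeping then yields $\max_{t\in\R}|\sumbiN \fo(k)\sinpitmk|\ge \psi_N(\fo)\to+\infty$ along \emph{every} $N$, not a subsequence, which is exactly~\eqref{eq:diverge}.

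The main obstacle, and where I expect most of the effort to go, is step (ii) combined with the block structure: one needs that a single bump $g_r$ produces a partial-sum value of the right order \emph{simultaneously} for \emph{all} $N$ in a whole interval $[N_{l_r},N_{l_{r+1}})$, not just at one $N$. This is what separates strong divergence from ordinary divergence — a naive construction gives a large value at one sampling instant for one $N$ and then the next bump must take over, which only yields $\limsup$. The fix is to choose $g_r$ whose relevant spectral mass sits at frequencies making $g_r(k)$ behave like a slowly varying signed sequence over the index range $|k|\lesssim N_{l_{r+1}}$, so that $\sumbiN g_r(k)\sinpiNphmk$ is a partial sum of a (conditionally convergent) series that stays large in modulus with a fixed sign throughout the block before the next bump $g_{r+1}$, living on a far sparser/higher range, takes over. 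Verifying the uniform lower bound over the block, and simultaneously the uniform upper bound on all cross terms, is the technical heart of the argument; once those two estimates are in place the assembly of $\fo$ and the passage to the limit are routine.
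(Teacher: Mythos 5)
Your overall architecture (evaluate the partial sum at $t=\Nph$, build $\fo$ as a sum $\sum_r c_r g_r$ of plateau-type bumps at doubly-exponentially growing scales, control $\|\fo\|_{\PWpio}$ by $\sum_r c_r\|g_r\|_{\PWpio}$) does match the paper's, and you correctly locate the difficulty: a single building block must produce a contribution of order $\log N_{l_r}$ for \emph{every} $N$ in a whole block, not just at one $N$. But your step (iii) contains a genuine gap, and the mechanism you propose for it cannot work. If $g_{r'}$, $r'>r$, is to carry its own block all the way down to the bottom endpoint $N=N_{l_{r'}}$, its samples must be of unit size on a long range of indices reaching up to (and well below) $N_{l_{r'}}$; consequently, for $N$ near the top of block $r$ its contribution to $\sum_{k=-N}^{N}$ can be as large as $c_{r'}\cdot\frac{2}{\pi}\log N_{l_{r+1}}$ in modulus. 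Summing over $r'>r$, the tail is at least $c_{r+1}\cdot\frac{2}{\pi}\log N_{l_{r+1}}$, which necessarily tends to infinity whenever the main terms $c_r\log N_{l_r}$ do. So ``the tails are uniformly bounded because $c_{r'}$ decays fast'' is incompatible with your requirement (ii). The alternative of pushing the sample support of $g_{r'}$ out beyond $N_{l_{r'}}$ kills its contribution at the bottom of its own block, which is exactly the failure mode you yourself describe for the naive construction (and is why the analogous one-sided shifted-bump construction in the paper's Theorem~\ref{thm:four} yields only $\limsup$ divergence).

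The missing idea is that the cross terms need not be bounded above at all: they only need to be nonnegative. The paper takes $g=\sum_l l^{-2}w_{N_l}$ with $N_l=2^{l^3}$, where $w_{N_l}$ is the symmetric trapezoidal plateau of~\eqref{eq:w}, and then modulates: $\fo(k)=(-1)^k g(k)$. At $t=\Nph$ every term of the partial sum becomes $(-1)^N g(k)/(\pi(\Nph-k))$ with $\Nph-k>0$, i.e.\ all $2N+1$ terms share one sign, so overlapping plateaus reinforce rather than interfere, and one may simply drop all but the single plateau $w_{N_{\hat k+1}}$ that covers all of $[-N,N]$; this alone gives the lower bound $\frac{1}{(\hat k+1)^2}\opi\log N_{\hat k}=\frac{\log 2}{\pi}\frac{\hat k^3}{(\hat k+1)^2}\to\infty$ for $N\in[N_{\hat k},N_{\hat k+1}]$. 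Without this sign alignment (or some substitute for it) your estimates (ii) and (iii) are mutually incompatible and the proof does not close.
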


Before proving Theorem~\ref{thm:one}, we first set it in a broader mathematical context, and then 
present a corollary and a conjecture. 
Theorem~\ref{thm:one} is interesting for functional analysis because of its relationship 
to the fundamental Banach-Steinhaus Theorem~\cite{BS27}. 
The latter theorem says that a set of bounded linear  
operators acting on a Banach space is either uniformly bounded, or the 
supremum of the operators applied to elements of the space diverges for all elements in a residual set. 
Thus, if the operators being considered are countable and indexed, say $\{T_k\}_{k\in\N}$, 
then if $\sup_k\|T_k\|=+\infty$, 
\begin{equation*}
\sup_k\|T_kx\|=+\infty
\end{equation*}
for all $x$ in a residual set. 
The strength of this statement is that it holds for such a large set of functions; 
its weakness, however, is that one only has $\sup_k\|T_kx\|=+\infty$, while in some settings 
it is natural to ask if $\lim_{k\rightarrow\infty}\|T_kx\|=+\infty$. 
This is the case when one is interested in  bounded behavior of a subsequence. 
In particular, $\sup_{k}\|T_k z\|=+\infty$ does not exclude the possibility that there exists a subsequence 
$\{k_l\}_{l\in\N}$ such that $\sup_{l}\|T_{k_l}x\|<\infty$, or even $\sup_{l}\|T_{k_l}\|<\infty$. 
Thus, the investigation of a reconstruction method is incomplete if it only shows divergence of the $\sup$. 
The value of Theorem~\ref{thm:one}, then, is that it demonstrates strong divergence. 

\begin{remark}
Projections onto subsets of the Walsh functions provide an example where 
\[\sup_{k\in\N}\|P_k\|_{L^1([0,1])}=+\infty,\] yet 
$\|P_{2^k}\|_{L^1([0,1])}=1$ for all $k$, where $P_l$ is the projection onto the first $l$ 
Walsh functions~\cite{Fin49}. 
Thus Q1) holds even though the sequence of projections is divergent. 
Note also that $\|\cdot\|_{L^1([0,1])}$ is the norm corresponding to $\PWpio$.
\end{remark}

It is possible that the methods developed here can be adapted to pursue further statements similar 
to the Banach-Steinhaus Theorem for other nonlinear operators, as is done in~\cite{DNvW85}. 
While we have not indicated the rates of divergence in the statements of our theorems, the proofs are 
constructive and, therefore, yield quantitative divergence estimates. 
Thus, one could potentially couple these techniques with others developed, for example in~\cite{DN85}, 
to obtain quantitative versions of the Banach-Steinhaus Theorem for other operators.

There is also much to investigate about the relationship between a statement like Theorem~\ref{thm:one} and 
the Banach-Steinhaus Theorem. 
The latter theorem could not yield the strong limit proved in Theorem~\ref{thm:one}, yet it would be 
interesting to know the nature of the set of functions $f$ for which divergence occurs, 
in particular if it is a residual set, as occurs in the Banach-Steinhaus Theorem. 
Similarly, the Banach-Steinhaus Theorem is not applicable to the nonlinear operator of Theorem~\ref{thm:three}, 
yet here as well it would be interesting to understand the set for which the operator diverges.

A large class of standard reconstruction procedures was addressed in~\cite{BM08} and $\limsup$ divergence 
results were presented. 
Theorem~\ref{thm:one} is a strengthening of Theorem 3 in~\cite{BM08} from $\limsup$ to the limit, 
and the proof given here holds for the procedures addressed in~\cite{BM08}. 
As a corollary, we state this explicitly for the Valiron reconstruction mentioned earlier. 
As pointed out in Remark~\ref{remark:Valiron}, what is significant here is that despite absolute 
convergence for every fixed $t\in\R$, we still have global divergence of the $\|\cdot\|_{\infty}$ norm. 
While we state the corollary for the Valiron series, the same behavior holds for the wider class of~\cite{BM08}. 
\begin{corollary}\label{cor:Valiron}
For every fixed $t_0\in \R$, there exists a function $f_1\in\PWpio$ such that 
\begin{equation*}
\lim_{N\rightarrow\infty}\left( \sup_{t\in\R}\left|f_1(t)-\left(f_1(t_0)\frac{\sin \pi t}{\sin\pi t_0}+(t-t_0)\sumbiN \frac{f_1(k)}{k-t_0} \sinpitmk\right)\right|\right)=+\infty.
\end{equation*}
\end{corollary}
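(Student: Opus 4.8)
The plan is to \emph{deduce} Corollary~\ref{cor:Valiron} from Theorem~\ref{thm:one}. Inside the truncated Valiron series one can isolate a copy of the truncated Shannon series $\sumbiN f(k)\sinpitmk$ plus a correction term that turns out to be bounded in $N$; hence the (now known) divergence of the Shannon truncation forces divergence of the Valiron reconstruction error. Throughout, fix $t_0\in\R\setminus\Z$ and write $(\mathcal{V}_N f)(t):=f(t_0)\frac{\sin\pi t}{\sin\pi t_0}+(t-t_0)\sumbiN\frac{f(k)}{k-t_0}\sinpitmk$ for the truncated Valiron series, so that the object of study is $\sup_{t\in\R}|\fo(t)-(\mathcal{V}_N\fo)(t)|$.

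First I would record the elementary identity, valid for every $f\in\PWpio$ and every $t\in\R$,
\[
(\mathcal{V}_N f)(t)=\sumbiN f(k)\sinpitmk+\sin(\pi t)\,A_N(f),\qquad
A_N(f):=\frac{f(t_0)}{\sin\pi t_0}+\frac1\pi\sumbiN\frac{(-1)^k f(k)}{k-t_0},
\]
which follows by inserting $\frac{t-t_0}{t-k}=1+\frac{k-t_0}{t-k}$ and $\sin\pi(t-k)=(-1)^k\sin\pi t$ term by term in the sum. The next step is to check that $A_N(f)$ stays bounded as $N\to\infty$. Evaluating the identity at $t=t_0$ and using $(\mathcal{V}_N f)(t_0)=f(t_0)$ (the factor $t-t_0$ annihilates the sum), one gets $\sin(\pi t_0)\,A_N(f)=f(t_0)-\sumbiN f(k)\frac{\sin\pi(t_0-k)}{\pi(t_0-k)}$; by Brown's Theorem the right-hand side tends to $0$, so (recall $\sin\pi t_0\ne0$) $A_N(f)\to0$, and in particular $\sup_N|A_N(f)|<\infty$.

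Now let $\fo$ be the function furnished by Theorem~\ref{thm:one}. Combining the identity with $|\fo(t)|\le\|\fo\|_{\PWpio}$ and $|\sin\pi t|\le1$, for every $t\in\R$
\[
|\fo(t)-(\mathcal{V}_N\fo)(t)|\ \ge\ \Big|\sumbiN \fo(k)\sinpitmk\Big|-|A_N(\fo)|-\|\fo\|_{\PWpio},
\]
and taking the supremum over $t$ gives $\sup_{t\in\R}|\fo(t)-(\mathcal{V}_N\fo)(t)|\ge \max_{t\in\R}\big|\sumbiN \fo(k)\sinpitmk\big|-|A_N(\fo)|-\|\fo\|_{\PWpio}$. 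By Theorem~\ref{thm:one} the first term tends to $+\infty$, while the other two are bounded in $N$; hence the left-hand side tends to $+\infty$, which is precisely the assertion of Corollary~\ref{cor:Valiron}.

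The substantive work is all in Theorem~\ref{thm:one}; the only thing that needs care here is that the correction $A_N(\fo)$ is a \emph{single} bounded quantity — and not, say, something growing like a Lebesgue constant — and this is exactly what the evaluation of the identity at the extra node $t=t_0$, together with the convergence supplied by Brown's Theorem, delivers. The same reduction applies \emph{mutatis mutandis} to the other interpolation procedures treated in~\cite{BM08}: for each of these the truncation differs from the Shannon truncation by a bounded multiple of a fixed bounded function, so Theorem~\ref{thm:one} propagates to all of them.
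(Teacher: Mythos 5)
Your proof is correct, but it takes a genuinely different route from the paper. The paper never argues Corollary~\ref{cor:Valiron} directly: it asserts that the explicit construction and the lower bound at $t=N+\frac{1}{2}$ in the proof of Theorem~\ref{thm:one} go through for the whole class of reconstruction processes of~\cite{BM08}, of which the Valiron series is one instance -- i.e.\ the intended argument is to rerun the hard estimate with the Valiron kernel in place of the Shannon kernel. You instead treat Theorem~\ref{thm:one} as a black box and reduce to it via the exact identity $(\mathcal{V}_Nf)(t)=\sumbiN f(k)\sinpitmk+\sin(\pi t)\,A_N(f)$, which is correct (it is just $\frac{t-t_0}{t-k}=1+\frac{k-t_0}{t-k}$ together with $\sin\pi(t-k)=(-1)^k\sin\pi t$, and both sides are entire so the apparent singularities at $t=k$ are harmless). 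The one point that genuinely needs care -- that $A_N(f)$ stays bounded in $N$, which is not obvious from its formula since $\sum_k|f(k)|/|k-t_0|$ need not converge for $f\in\PWpio$ -- you dispose of by the neat observation that evaluating the identity at $t=t_0$ identifies $\sin(\pi t_0)A_N(f)$ with the Shannon reconstruction error at the single point $t_0$, which tends to $0$ by Brown's theorem; this is exactly the right device, and the final pointwise inequality and passage to the supremum are sound. What your approach buys: it is self-contained (no appeal to the general framework of~\cite{BM08}), it shows the divergence is achieved by the \emph{same} function $\fo$ as in Theorem~\ref{thm:one} for every $t_0$ simultaneously, and it isolates the structural reason the corollary holds, namely that the Valiron truncation differs from the Shannon truncation by a single bounded multiple of $\sin\pi t$. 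What the paper's route buys is applicability to procedures in~\cite{BM08} for which no such clean algebraic reduction to the Shannon kernel is available; your closing claim that the same reduction covers all of them would need to be checked case by case. One cosmetic remark: the corollary as stated says ``for every fixed $t_0\in\R$,'' but $t_0\notin\Z$ is required for the Valiron series to be defined, and your restriction to $t_0\in\R\setminus\Z$ is the correct reading.
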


We conjecture that the behavior seen in Theorem~\ref{thm:one} holds much more widely. 
To formulate the conjecture  
we need further notation, and we now define the terms needed for the rest of the paper. 
For a separable Hilbert space $H$, a set of vectors $\{\psi_{k}\}_{k\in\Z}$ is called a Riesz basis 
if $\{\psi_k\}_{k\in\Z}$ is complete in $H$ and there exist positive constants such that for all 
scalars $\{c_k\}_{k\in\Z}$ and $M,N\in \N$, 
\begin{equation*}
A\sum_{k=-M}^N|c_k|^2\leq \left\|\sum_{k=-M}^Nc_k\psi_k\right\|^2\leq B\sum_{k=-M}^N|c_k|^2.
\end{equation*}
A sequence $\{\tk\}_{k\in\Z}$ is a complete interpolating sequence for $\PWpit$ if the interpolation 
problem $f(\tk)=c_k$, $k\in\Z$ has a unique solution $f\in \PWpit$ for every sequence $\{c_k\}_{k\in\Z}$ 
satisfying $\sumbiinf |c_k|^2<\infty$, see~\cite{You01}. 
If $\{\tk\}_{k\in\Z}$ is a complete interpolating sequence, then one can use the functions
\begin{equation*}
\phi(z):=\lim_{R\rightarrow \infty}\prod_{|t_k|\leq R,\;\tk\neq 0}\left(1-\frac{z}{\tk}\right)
\end{equation*} 
and
\begin{equation*}
\phi_k(z):=\phiz,\;\;\;k\in\Z
\end{equation*}
for reconstruction. 
This approach has been used in communications engineering for decades,~\cite{Voe70,VR73,Log84}.

Later in the paper we will consider a special case of this construction. 
For this we will need the following definitions. 
An entire function is of exponential type $\sigma$ if for all $\epsilon>0$ there exists a constant $C(\epsilon)>0$ 
such that for all $z\in\C$, $|f(z)|\leq C(\epsilon)e^{(\sigma+\epsilon)|z|}$. 
\begin{definition}\label{def:sinetype}
An entire function is of sine type if 
\begin{itemize}
\item[i)] the zeros of $f$ are separated and simple, and
\item[ii)] there exist positive constants $A,B$ and $H$ such that $Ae^{\pi|y|}\leq |f(x+iy)|\leq Be^{\pi|y|}$ 
whenever $x$ and $y$ are real and $|y|\geq H$. 
\end{itemize}
\end{definition}
With this notation set, we state our conjecture and then prove Theorem~\ref{thm:one}. 
%The class of such functions  is denoted $\B_\sigma$. 
%The Bernstein space $\B^p_\sigma$ is the space of those functions in $\B_\sigma$ whose restriction to $\R$ is 
%contained in $L^p(\R)$. 
%The norm in $\B^p_{\sigma}$ is the $\|\cdot\|_{L^p(\R)}$ norm of the restriction to $\R$. 
%It holds $\B^p_\sigma\subset\B^q_{\sigma}$ for $1\leq p\leq q\leq\infty$,~\cite{Hig96}; 
%in particular $\B^p_\sigma\subset \B^\infty_\sigma$ for all $1\leq p\leq \infty$. 
%By Hadamard's factorization theorem,~\cite{Lev96}, a bandlimited signal $f\in\B^{\infty}_\sigma$ has the 
%representation
%\begin{equation*}
%f(z)=z^me^{a+i bz}\lim_{R\rightarrow \infty}\prod_{|\lambda_k|\leq R,\;\lambda_k\neq 0}\left(1-\frac{z}{\lambda_k}\right),
%\end{equation*}
%where $\{\lambda_k\}_{k\in\Z}$ are the zeros of $f$, $m$ is the multiplicity of the zero at the origin, 
%$a\in\C$ and  $b\in\R$. 

\begin{conjecture}\label{conj:complete}
Let $\{\tk\}_{k\in\Z}$ be an arbitrary complete interpolating sequence.  
Then there exists a function $f_1\in\PWpio$ such that 
\begin{equation*}
\lim_{N\rightarrow\infty}\left(\max_{t\in\R}\left| \sumbiN f_1(\tk)\phi_k(t)\right|\right)=+\infty.
\end{equation*}
\end{conjecture}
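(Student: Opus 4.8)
\emph{A proof strategy.}
The plan is to follow the architecture of the proof of Theorem~\ref{thm:one} --- which is the case $\tk=k$ --- and to isolate the single place where concrete structure is used. Set $S_N f:=\sumbiN f(\tk)\phik$, regarded as a bounded linear operator $\PWpio\to L^\infty(\R)$; here boundedness of $S_N$ uses that point evaluations are bounded on $\PWpio$, indeed $|f(\tk)|\le\|f\|_{\PWpio}$. Two things are then needed: (i) that the operator norms $\|S_N\|_{\PWpio\to L^\infty(\R)}$ tend to infinity as $N\to\infty$, ideally with an explicit (however slow) rate; and (ii) an upgrade of (i) to a single witnessing $f_1\in\PWpio$ with $\lim_N\|S_N f_1\|_{L^\infty(\R)}=+\infty$, not merely $\limsup$. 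Step (ii) is precisely the strong-divergence mechanism already developed for Theorem~\ref{thm:one}: one builds $f_1=\sum_r\lambda_r g_r$ from frequency-localized near-extremal signals $g_r$ for $S_{N_r}$, with $N_r\uparrow\infty$ rapidly and $\lambda_r\downarrow 0$ chosen so that $\sum_r\lambda_r\|g_r\|_{\PWpio}<\infty$ while, for every large $N$ and the index $r$ with $N_r\le N<N_{r+1}$, $\|S_N f_1\|_{L^\infty(\R)}\ge\lambda_r\|S_{N_r}g_r\|_{L^\infty(\R)}-(\text{contributions of the other humps})$, the error terms being summably small because the $g_r$ have essentially disjoint spectra and point evaluations on $\PWpio$ are bounded. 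This device carries over once (i) holds, so the real content is (i).

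For (i) I would dualize. Since $f(\tk)=\otwopi\inttwopi\fh(\omega)e^{i\tk\omega}d\omega$, we have $S_N f(t)=\otwopi\inttwopi\fh(\omega)\bigl(\sumbiN\phik(t)e^{i\tk\omega}\bigr)d\omega$, and as $\fh$ ranges over all of $L^1[-\pi,\pi]$ with $2\pi\|f\|_{\PWpio}=\|\fh\|_{L^1[-\pi,\pi]}$,
\[
\|S_N\|_{\PWpio\to L^\infty(\R)}=\sup_{t\in\R}\ \Bigl\|\sumbiN\phik(t)\,e^{i\tk\omega}\Bigr\|_{L^\infty_\omega[-\pi,\pi]}.
\]
Using $\phik(t)=\frac{\phi(t)}{\phi'(\tk)(t-\tk)}$, the inner norm equals $|\phi(t)|\,\bigl\|\sumbiN\frac{e^{i\tk\omega}}{\phi'(\tk)(t-\tk)}\bigr\|_{L^\infty_\omega}$. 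The task becomes: for each large $N$, choose a real point $\tNs$ lying roughly midway between $\tN$ and $t_{N+1}$ and a frequency $\omega_N\in[-\pi,\pi]$ so that $|\phi(\tNs)|$ is bounded below, $|\phi'(\tk)|$ is bounded above for the indices $k$ near $\pm N$, the distances $|\tNs-\tk|$ behave like $|\Nph-k|$ up to constants, and the phases $e^{i\tk\omega_N}$ align, so that $\bigl|\sum_{|k|\le N}\frac{e^{i\tk\omega_N}}{\phi'(\tk)(\tNs-\tk)}\bigr|$ is a harmonic-type sum of size $\asymp\log N$. For $\tk=k$ this is the classical computation at $\tNs=\Nph$, $\omega_N=\pi$. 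In general, the separation and density-one regularity of a complete interpolating sequence control the distances $|\tNs-\tk|$, and the Riesz-basis bounds $A\le\|\phik\|^2_{\PWpit}\le B$ together with $\phik(\tk)=1$ control $|\phi'(\tk)|$ for $k$ in the bulk; the alignment of phases is a reordering and sign argument.

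The main obstacle is the lower bound $|\phi(\tNs)|\gtrsim 1$ uniformly in $N$. When $\phi$ is of sine type --- the setting of Theorem~\ref{thm:two} --- Definition~\ref{def:sinetype}(ii) gives two-sided bounds for $|\phi|$ on horizontal lines, and a sine-type function satisfies $|\phi(x)|\asymp\mathrm{dist}(x,\{\tk\})$ on $\R$, so the midpoint of a gap does the job; the scheme above then proves the conjecture under this extra hypothesis. But an arbitrary complete interpolating sequence need not be the zero set of a sine-type function: the only structural information available is that $\phi$ is entire of exponential type $\pi$ with zero set $\{\tk\}$, that $\{\phik\}$ is a Riesz basis of $\PWpit$, and that $|\phi|^2$ satisfies the Muckenhoupt $(A_2)$ condition on $\R$ from Pavlov's characterization of complete interpolating sequences. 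Under only these hypotheses $|\phi|$ may be very small on long real intervals, and it is not clear that one can always locate points $\tNs$ between consecutive zeros with $|\phi(\tNs)|$ bounded below in a way compatible with the phase and sign bookkeeping. Supplying the missing quantitative link between the $(A_2)$ (equivalently, Riesz-basis) constants and the pointwise size of $|\phi|$ on $\R$ --- or finding an argument that bypasses it --- is the heart of the conjecture; a natural intermediate target is the case of \emph{real} complete interpolating sequences, where the associated de Branges space affords finer control of $\phi$ on the real axis.
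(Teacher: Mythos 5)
First, note that the statement you were asked to prove is presented in the paper as a \emph{conjecture}: the paper offers no proof of it, establishes only the sine-type instance (Theorem~\ref{thm:two}), and explicitly records the expectation that entirely new techniques will be needed for the general case. So there is no proof in the paper to compare yours against, and your submission --- which is candid about being a strategy rather than a proof --- must be judged on whether it closes the gap. It does not, and you correctly locate where it fails: for an arbitrary complete interpolating sequence the only control on the generating function $\phi$ along the real axis comes from Pavlov's $(A_2)$ condition on $|\phi|^2$ (equivalently, the Riesz-basis property), and this does not yield a uniform lower bound $|\phi(\tNs)|\geq c>0$ at the midpoints of the gaps $(t_N,t_{N+1})$, nor two-sided bounds on $|\phi'(\tk)|$; an $(A_2)$ weight can be very small on long intervals. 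That is precisely the ingredient supplied in the proof of Theorem~\ref{thm:two} by Lemma 5 of~\cite{BM12} and the sine-type estimates of~\cite{Lev96}, and it is why the paper stops at that instance. Your reduction of the conjecture to this quantitative question is a fair account of the state of the problem, but it restates the difficulty rather than resolving it.

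Two smaller points. Your dualization $\|S_N\|_{\PWpio\to L^\infty(\R)}=\sup_{t}\bigl\|\sum_{|k|\le N}\phik(t)e^{i\tk\omega}\bigr\|_{L^\infty_\omega[-\pi,\pi]}$ is correct, but step (ii) as you describe it (spectrally disjoint blocks $g_r$ plus summable cross terms) is the mechanism of the $\limsup$ result in Theorem~\ref{thm:four}, not of the strong divergence in Theorems~\ref{thm:one} and~\ref{thm:two}. In those proofs the single witness has samples $(-1)^k g(k)$ with $g(k)>0$ decaying slowly, and the lower bound is obtained at \emph{every} $N$ by evaluating at a half-integer (resp.\ gap midpoint) so that all $2N+1$ terms of the sum are simultaneously positive; no cross-term cancellation is needed. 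To run this for a general complete interpolating sequence you would also need the sign regularity of $\phi'$ along the nodes --- for sine type one has $\phi'(\tk)=(-1)^k c_k$ with $c_k\asymp 1$, which is what makes the modulation by $(-1)^k$ produce an all-positive sum --- and this alternation with comparable magnitudes is likewise not guaranteed under the Riesz-basis hypothesis alone. So the obstruction is not only the pointwise size of $|\phi|$ at the evaluation points but also the sign and size behavior of $\phi'(\tk)$; any genuine attack on Conjecture~\ref{conj:complete} must address both.
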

In Theorem~\ref{thm:two} we prove an instance of this conjecture for a class of complete interpolating 
sequences that is of practical importance, 
yet we suspect that entirely new techniques will be necessary to prove the conjecture in its generality. 
\begin{proof}[Proof of Theorem~\ref{thm:one}]
We begin with a sequence of functions $\{w_N\}_{N\in\N}$ contained in $\PWpio$. 
We set 
\begin{equation}\label{eq:w}
\wN(k)=\left\{
\begin{array}{ll}
1&|k|\leq N,\;k\in\Z\\
1-\frac{|k|-N}{N}&N+1\leq |k|\leq 2N,\;k\in \Z\\
0&|k|\geq 2N,\; |k|\in \Z.
\end{array}\right.
\end{equation}
and 
\begin{equation*}
\wN(t)=\sum_{k=-2N+1}^{2N-1}w_N(k)\sinpitmk,
\end{equation*}
for which the bound
\begin{equation*}
\|\wN\|_{\PWpio}<3
\end{equation*}
was obtained in~\cite{BM10}. 
We now consider the function
\begin{equation*}
g(t)=\sum_{l=1}^\infty \frac{1}{l^2}w_{N_l}(t),\;\;\;t\in\R,
\end{equation*}
where $N_l=2^{l^3}$, $l\in\N$. 
Here we have 
\begin{equation*}
\|g\|_{\PWpio}\leq 3\sum_{l=1}^\infty \frac{1}{l^2}<+\infty.
\end{equation*}
Let $\gh$ denote the Fourier transform of $g$ and set
\begin{equation}\label{eq:five}
G(e^{i\omega})\eqD \sumbiinf g(k)e^{-ik\omega}.
\end{equation}
We have $\gh(\omega)=G(e^{i\omega})$ for almost all $|\omega|\leq \pi$; 
that is, $G\in L^1(\partial D)$, where $L^1(\partial D)$ denotes the set of Lebesque-measurable functions 
on the unit circle $|z|=1$ satisfying 
\begin{equation}
\otwopi \inttwopi |G(e^{i\omega})|d\omega<\infty.\label{eq:G}
\end{equation}
We set $F_1(e^{i\omega})=G(e^{i(\omega+\pi)})$ so that 
\begin{equation}\label{eq:six}
\Fo(\eio)\eqD \sumbiinf g(k)(-1)^ke^{-i\omega k }
\end{equation}
and 
\begin{equation}
\foh(\omega)=\Fo(\eio),\;\;\;\textnormal{ for }\;|\omega|\leq \pi,
\end{equation}
so that 
\begin{equation*}
\fo(t)=\otwopi \inttwopi \foh(\omega)e^{i\omega t}d\omega.
\end{equation*}
Then $\fo\in\PWpio$ and 
\begin{equation*}
\fo(k)=(-1)^kg(k).
\end{equation*}
Note that equalities~\eqref{eq:five} and~\eqref{eq:six} hold in the sense of distribution theory. 

Consider a fixed $N$. The functions $G$ and $\Fo$ are uniquely determined by their Fourier coefficients. 
We, therefore, have
\begin{eqnarray*}
\left|\sumbiN \fo(k)\sinpiNphmk\right|&=&\left|\sin\pi\left(\Nph\right)\sumbiN \fo(k)\frac{(-1)^k}{\pi(\Nph-k)}\right|\\
&=&\opi \sumbiN \frac{g(k)}{(\Nph-k)}.
\end{eqnarray*}
There exists a unique $\kh$ such that $N\in[N_{\kh},N_{\kh+1}]$. 
This gives us
\begin{eqnarray}
\opi\sumbiN \frac{g(k)}{\Nph-k}&\geq& \frac{1}{(\kh+1)^2}\opi\sumbiN\frac{1}{\Nph-k}
=\frac{1}{(\kh+1)^2}\opi \sum_{k=0}^{2N}\frac{1}{k+\oh}\label{eq:kone}\\
&>& \frac{1}{(\kh+1)^2}\opi\int_0^{2N+1}\frac{dx}{x+\oh}
=\frac{1}{(\kh+1)^2}\opi\log(4N+3)\nonumber\\
&\geq& \frac{1}{(\kh+1)^2}\opi \log N_{\kh}=\frac{(\kh)^3}{(\kh+1)^2}\opi \log 2.\label{eq:ktwo}
\end{eqnarray}
Thus, 
\begin{equation*}
\left|\sumbiN \fo(k)\sinpiNphmk\right|\geq \opi\log 2\frac{(\kh)^3}{(\kh+1)^2},
\end{equation*}
from which follows
\begin{equation*}
\max_{t\in\R}\left|\sumbiN \fo(t)\sinpiNphmk\right|\geq \opi\log 2\frac{(\kh)^3}{(\kh+1)^2},
\end{equation*}
and, hence, 
\begin{equation*}
\lim_{N\rightarrow\infty}\left( \max_{t\in\R}\left|\sumbiN \fo(t)\sinpiNphmk\right|\right)=+\infty.
\end{equation*}
\end{proof}

\section{Behavior for the Hardy space $\Ho$}\label{sec:four}

The Hardy space $H_p$ consists of analytic functions on $D$ such that 
\begin{equation*}
\|f\|_{H^p}:=\sup_{0\leq r<1}\left(\otwopi\int_{-\pi}^\pi |f(r\eiomega)|^pd\omega\right)^{1/p}<\infty,
\end{equation*}
which also defines the norm $\|\cdot\|_{H^p}$. 
The space $\Ho$ is then the space of functions $f$ with representation 
$f(z)=\otwopi\int_{-\pi}^{\pi}g(\omega)e^{i\omega z}d\omega$ for $z\in\C $ and some $g\in H^1$, see~\cite{Hig96}. 
The norm on $\Ho$ is given by $\|f\|_{\Ho}=\|g\|_{H^1}$. 
$\Ho$ is a closed subspace of $\PWpio$ and so invites the same questions for its sampling series that 
are addressed for $\PWpio$. 
In particular, if a form of convergence of a certain sampling procedure does not hold for all $f\in\PWpio$, 
it is appropriate to ask whether the convergence holds for all $f\in\Ho$. 
A fundamental sampling procedure is the finite Fourier expansion of $\fh$:
\begin{equation*}
(\F_Nf)(\omega):=\sumbiN f(k)\eiok.
\end{equation*} 

There is a rich history of results for this expansion. 
Kolmogorov famously showed \cite{Kol26} that there exists $\fo\in\PWpio$ so that for all $\omega\in[-\pi,\pi)$, 
\begin{equation*}
\limsup_{N\rightarrow \infty}\left| \sumbiN \fo(k)\eiok\right|=+\infty. 
\end{equation*}
It further holds, see Section VIII.3 of~\cite{Zyg02}, that for every subsequence $\{N_l\}_{l\in\N}$ there exists a function $f_2\in\PWpio$ (dependent on 
the subsequence) so that 
\begin{equation}\label{eq:divsub}
\limsup_{l\rightarrow \infty}\left| \sum_{k=-N_l}^{N_l} f_2(k)\eiok\right|=+\infty
\end{equation}
for almost all $\omega\in[-\pi,\pi)$. 
Yet, Theorem 5.11 in~\cite{Zyg02} states that for all $f\in \Ho$ and 
for every subsequence $\{N_l\}_{l\in\N}$ satisfying 
\begin{equation}\label{eq:lacunary}
\inf_{l}\frac{N_{l+1}}{N_l}=\lambda>1,
\end{equation} 
where such a sequence is called lacunary, 
\begin{equation}\label{eq:Zygmund}
\lim_{l\rightarrow \infty} \sum_{k=0}^{N_l} f(k)\eiok=\fh(\omega)
\end{equation}
for almost all $\omega\in [\pi,\pi)$. 
Thus, for every subsequence $\{N_l\}_{l\in\N}$ there exists a function $f\in\PWpio$ for which 
the divergence~\eqref{eq:divsub} holds, 
yet if $\{N_l\}_{l\in\N}$ grows fast enough, one has the convergence~\eqref{eq:Zygmund} for all functions in $\Ho$. 
The natural question is: does convergence hold globally for functions in $\Ho$, i.e. is the answer to Q1) for 
$\Ho$ positive? 
The following theorem shows that the answer is negative.

\begin{theorem}\label{thm:four}
Let $\{N_l\}_{l\in\N}$ be an arbitrary sequence of natural numbers. 
There exists a function $f_1\in \Ho$ such that 
\begin{equation}\label{eq:four}
\limsup_{l\rightarrow\infty}\left( \max_{t\in\R}\left|\sum_{k=0}^{N_l} f_1(k)\sinpitmk \right|\right)=+\infty.
\end{equation}
\end{theorem}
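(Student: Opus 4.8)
The plan is to mimic the construction in the proof of Theorem~\ref{thm:one}, but with two important modifications: first, the building blocks $w_N$ must be replaced by functions whose spectra lie in $H^1(\partial D)$ rather than merely $L^1(\partial D)$, since $\fo$ must belong to $\Ho$; and second, because we may only conclude $\limsup$ divergence (not strong divergence), the series defining $\fo$ may be assembled greedily along an arbitrary prescribed subsequence $\{N_l\}$, rather than along the fixed geometric sequence $2^{l^3}$.

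First I would recall that membership of $f$ in $\Ho$ is equivalent to $\fh\in H^1$, i.e. the negative Fourier coefficients of $\fh$ vanish, which by the correspondence $\fh(\omega)=F(\eio)$ means we need the sample sequence $\{\fo(k)\}$ to be supported on $k\geq 0$ (up to the modulation $(-1)^k$ used in Theorem~\ref{thm:one}). So the natural analogue of $w_N$ is a one-sided bump: something like $v_M(k)=1$ for $0\le k\le M$, tapering linearly to $0$ on $M+1\le k\le 2M$, and $0$ otherwise, with $v_M(k)=0$ for $k<0$. One then forms $v_M(t)=\sum_{k=0}^{2M-1}v_M(k)\sinpitmk$. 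The key technical input is a uniform bound $\|v_M\|_{\Ho}\le C$ independent of $M$; this should follow from the same kind of estimate used in~\cite{BM10} for $\|w_N\|_{\PWpio}<3$, because the one-sided trapezoidal sequence is (after conjugation) the partial-sum kernel of a Fejér-type average, and its $H^1$ norm is controlled exactly as the $L^1$ norm of the corresponding de la Vallée-Poussin kernel. Establishing this uniform $H^1$ bound cleanly is, I expect, the main obstacle — one must make sure the one-sidedness does not destroy the boundedness that the symmetric version enjoyed.

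Next, given the arbitrary sequence $\{N_l\}$, I would pass to a rapidly increasing subsequence $\{N_{l_r}\}$ — chosen recursively so that each $N_{l_{r+1}}$ dwarfs $N_{l_r}$ — and set
\begin{equation*}
g(t)=\sum_{r=1}^\infty \frac{1}{r^2}\, v_{N_{l_r}}(t),
\end{equation*}
which lies in $\Ho$ with norm at most $C\sum r^{-2}<\infty$ by the uniform bound and the closedness of $\Ho$ in $\PWpio$. Modulating by $(-1)^k$ as in Theorem~\ref{thm:one} produces $\fo\in\Ho$ with $\fo(k)=(-1)^k g(k)$, $g(k)\ge 0$, supported on $k\ge 0$. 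Then, evaluating the truncated reconstruction at the half-integer point $t=N_{l_r}+\frac12$ and using $\sin\pi(N_{l_r}+\oh-k)=(-1)^{N_{l_r}-k}$, the modulation cancels and one is left with a positive sum
\begin{equation*}
\left|\sum_{k=0}^{N_{l_r}} \fo(k)\,\frac{\sin\pi(N_{l_r}+\oh-k)}{\pi(N_{l_r}+\oh-k)}\right|
=\opi\sum_{k=0}^{N_{l_r}}\frac{g(k)}{N_{l_r}+\oh-k}
\ \ge\ \frac{1}{r^2}\,\opi\sum_{k=0}^{N_{l_r}}\frac{1}{N_{l_r}+\oh-k},
\end{equation*}
since every term $v_{N_{l_{r'}}}(k)$ with $r'\ge r$ contributes nonnegatively and at least the $r$-th one equals $1$ on $0\le k\le N_{l_r}$. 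The remaining sum is $\opi\sum_{j=0}^{N_{l_r}}(j+\oh)^{-1}\gtrsim \opi\log N_{l_r}$, so the whole expression is at least $c\,r^{-2}\log N_{l_r}$. Because the subsequence was chosen to grow fast enough that $\log N_{l_r}\ge r^3$ (say), this lower bound tends to $+\infty$ as $r\to\infty$, which gives $\max_{t\in\R}|\cdot|\to\infty$ along $\{N_{l_r}\}$ and hence $\limsup_{l\to\infty}=+\infty$ along the full sequence, as claimed. The only subtlety beyond the uniform $H^1$ bound is checking that contributions from the blocks $v_{N_{l_{r'}}}$ with $r'<r$ do not have the wrong sign at $t=N_{l_r}+\oh$; but those blocks are supported on indices $k\le 2N_{l_{r'}}\ll N_{l_r}$, so near that evaluation point the denominators $N_{l_r}+\oh-k$ are all positive and the corresponding terms are again nonnegative, so they only help.
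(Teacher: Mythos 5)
There is a genuine gap, and it sits exactly at the step you flag as ``the main obstacle'': the uniform bound $\|v_M\|_{\Ho}\leq C$ for the one-sided trapezoid is \emph{false}. By Hardy's inequality, if $F(e^{i\omega})=\sum_{k\geq 0}a_ke^{ik\omega}\in H^1(\partial D)$ then $\sum_{k\geq 0}\frac{|a_k|}{k+1}\leq \pi\|F\|_{H^1}$; applied to $V_M(e^{i\omega})=\sum_{k= 0}^{2M}v_M(k)e^{ik\omega}$ this gives
\begin{equation*}
\|v_M\|_{\Ho}\;\geq\;\frac{1}{\pi}\sum_{k=0}^{M}\frac{1}{k+1}\;\geq\;\frac{1}{\pi}\log(M+1),
\end{equation*}
so the norm grows logarithmically. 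Your analogy with the de la Vall\'ee-Poussin kernel breaks down precisely because of the hard cutoff at $k=0$: the symmetric trapezoid is $L^1$-bounded because it tapers at \emph{both} ends, and truncating it to $k\geq 0$ is a Riesz projection, which is unbounded on $L^1$. This is fatal to the construction as written: since you also need $\log N_{l_r}\gtrsim r^3$ to make the lower bound $c\,r^{-2}\log N_{l_r}$ diverge, the series $\sum_r r^{-2}v_{N_{l_r}}$ has $\Ho$-norm at least of order $\sum_r r^{-2}\log N_{l_r}\geq \sum_r r=+\infty$, so $f_1\notin\Ho$; and rescaling the coefficients to restore summability kills the divergence of the lower bound.

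The paper's proof repairs exactly this point, and otherwise follows the route you describe (greedy subsequence, evaluation at half-integers, positivity of all terms). Instead of a genuinely one-sided bump it keeps the symmetric modulated trapezoid $q_N(k)=(-1)^k\wN(k)$ and \emph{translates} it: with $\Nol$ the largest integer satisfying $3\Nol\leq\Nl$, it uses $q^1_{\Nl}(t)=q_{\Nol}(t-\Nl+\Nol)$, whose samples are supported on $k\geq 1$. Translation only multiplies the Fourier transform by $e^{i(\Nl-\Nol)\omega}$, so $\|q^1_{\Nl}\|_{\Ho}=\|q_{\Nol}\|_{\PWpio}<3$ uniformly (and the left taper keeps the Hardy sum $\sum_k |q^1_{\Nl}(k)|/(k+1)$ bounded, consistent with this). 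The block's flat part now sits at the \emph{right} end of the window $[0,\Nl]$, where the kernel $1/(\Nl+\oh-k)$ is large, so the $\log$ lower bound survives. The price is that the far blocks $r>m$ are no longer disjoint from the window $[0,\Nlm]$ and must be estimated separately, which is what conditions (i) and (ii) on the subsequence and the bound $|q^1_{\Nlr}(k)|\leq k/\No_{l_r}$ are for. Your handling of the blocks with $r'<r$ and the final divergence step are fine once the building block is fixed.
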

\begin{conjecture}\label{conj:four}
Q2) holds for $\Ho$. That is, for each $f\in\Ho$ there exists a subsequence $\{N_l\}_{l\in\N}$ 
such that the $\limsup$ expression in~\eqref{eq:four} remains bounded for that $f$. 
In practical terms this means adaptive procedures can be effective for $\Ho$. 
\end{conjecture}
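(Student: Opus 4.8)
The plan is to pass to the frequency side and reduce the statement to a claim about partial Fourier sums of an $H^1$ function. For $g\in\PWpio$ the elementary bound $\max_{t\in\R}|g(t)|\le\|g\|_{\PWpio}$ holds, obtained by putting absolute values inside the inversion integral $g(t)=\otwopi\inttwopi\hat g(\omega)e^{i\omega t}\,d\omega$. Writing $T_Nf(t):=\sum_{k=0}^N f(k)\sinpitmk$, the spectrum of $T_Nf$ is the partial Fourier sum $S_N\hat f(\omega):=\sum_{k=0}^N f(k)e^{-i\omega k}$ of $\hat f$ restricted to $[-\pi,\pi]$, so that $\max_t|T_Nf|\le\|S_N\hat f\|_{L^1(\partial D)}$. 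Since $f\in\Ho$ forces $\hat f\in H^1$ with Fourier coefficients equal to the samples $f(k)$ and $f(k)=0$ for $k<0$, it would suffice to produce, for each $\hat f\in H^1$, a subsequence $\{N_l\}$ along which $\sup_l\|S_{N_l}\hat f\|_{L^1(\partial D)}<\infty$.

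First I would record a non-adaptive partial result that isolates the difficulty. Using the functions $\wN$ from the proof of Theorem~\ref{thm:one}, define the tapered (de~la Vall\'ee Poussin) reconstruction $V_Nf(t):=\sum_{k=0}^{2N}\wN(k)f(k)\sinpitmk$. On the frequency side $V_Nf$ has spectrum given by the circular convolution of $\hat f$ with $W_N(\omega):=\sum_k\wN(k)e^{-i\omega k}$, and $\|W_N\|_{L^1(\partial D)}=\|\wN\|_{\PWpio}<3$; Young's inequality then yields $\max_t|V_Nf|\le\|V_Nf\|_{\PWpio}=\|\hat f * W_N\|_{L^1(\partial D)}\le 3\|f\|_{\PWpio}$ uniformly in $N$. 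Because $\wN(k)=1$ for $|k|\le N$, the sharp-cutoff series differs from the tapered one only through the tail block, $T_Nf=V_Nf-E_N$ with $E_N(t)=\sum_{k=N+1}^{2N}\wN(k)f(k)\sinpitmk$ supported in frequency on $k\in(N,2N]$. Thus the conjecture reduces to choosing $\{N_l\}$ so that $\sup_l\max_t|E_{N_l}|<\infty$.

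To attack the tail term I would factor $\hat f=uv$ with $u,v\in H^2$ and $\|u\|_{H^2}\|v\|_{H^2}=\|\hat f\|_{H^1}$ (inner--outer factorization), and invoke Carleson's maximal theorem to obtain $u^\ast:=\sup_N|S_Nu|$ and $v^\ast:=\sup_N|S_Nv|$ in $L^2(\partial D)$ with $u^\ast v^\ast\in L^1$. Using the identity $S_N(uv)=(S_Nu)(S_Nv)-R_N$, where $R_N:=P_{>N}[(S_Nu)(S_Nv)]$ is the high-frequency part of the degree-$2N$ product, Cauchy--Schwarz controls $\|(S_Nu)(S_Nv)\|_{L^1}\le\|u^\ast v^\ast\|_{L^1}\le\|\hat f\|_{H^1}$ uniformly, so everything again comes down to $R_N$ (equivalently $E_N$). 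The energy of $u$ and $v$ in the block $(N,2N]$ tends to $0$, and the plan is to select $\{N_l\}$ adaptively---by a stopping-time choice tied to the distribution of this $L^2$ tail energy---so that the block contribution becomes negligible.

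The main obstacle is precisely the passage through the projection onto high frequencies. The crude estimate $\|R_N\|_{L^1}\lesssim\|I-S_N\|_{L^1\to L^1}\cdot(\text{block energy})\lesssim(\log N)\cdot(\text{block energy})$ loses a logarithmic factor coming from the unboundedness of the Riesz projection on $L^1$, and a block of fixed proportional width cannot always defeat it (for instance, if the tail energy decays like $1/\sqrt{\log N}$ the product still diverges along every subsequence). Removing this logarithm is the crux. I expect it to require either a refined lacunary maximal inequality guaranteeing $\sup_l|S_{N_l}\hat f|\in L^1$ for a sufficiently sparse, $\hat f$-dependent sequence (so that dominated convergence forces $\|S_{N_l}\hat f\|_{L^1}\to\|\hat f\|_{H^1}$), or a genuine use of the oscillation of $S_N\hat f$ so that the sup-norm $\max_t|E_N|$ is strictly smaller than its $L^1$ majorant---the bound $\max_t|E_N|\le\|E_N\|_{\PWpio}$ being the very step that is too lossy. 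That this logarithmic gap resists the elementary arguments above is consistent with the statement being offered here only as a conjecture.
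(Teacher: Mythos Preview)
The paper offers no proof of this statement: it is labeled a \emph{conjecture} and is left open. There is therefore nothing in the paper to compare your argument against.

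Your write-up is not a proof either, and you say so yourself in the final paragraph. The reduction you give is correct and useful: the bound $\max_{t\in\R}|T_Nf(t)|\le\|S_N\hat f\|_{L^1(\partial D)}$, the uniform control of the de~la~Vall\'ee~Poussin means via Young's inequality and the paper's estimate $\|w_N\|_{\PWpio}<3$, and the $H^1=H^2\cdot H^2$ factorisation combined with the Carleson--Hunt maximal inequality are all sound (the inequality $\|u^\ast v^\ast\|_{L^1}\le\|\hat f\|_{H^1}$ is missing the Carleson constant, but that is cosmetic). You have correctly isolated the obstruction as the high-frequency block $R_N=P_{>N}[(S_Nu)(S_Nv)]$, and your counter-scenario $\epsilon(N)\sim(\log N)^{-1/2}$ shows cleanly why the crude $\log N$ loss from the unboundedness of the Riesz projection on $L^1$ cannot be absorbed by an adaptive choice of $N_l$ alone. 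That is exactly the point at which the argument stops being a proof.

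So the honest summary is: the statement is open in the paper, and your proposal is a well-organised partial attack that names the genuine difficulty but does not resolve it. If you want to push further, the two directions you mention---an $f$-dependent lacunary maximal inequality giving $\sup_l|S_{N_l}\hat f|\in L^1$, or exploiting cancellation in $\max_t|E_N|$ rather than bounding it by $\|E_N\|_{\PWpio}$---are the natural ones, but neither is routine, and the paper gives no hint of a mechanism for either.
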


\begin{remark}
By the Banach-Steinhaus Theorem, for a fixed sequence $\{N_l\}_{l\in\N}$, the set of functions in $\Ho$   
for which the sampling procedure addressed in Theorem~\ref{thm:four} diverges  is a residual set. 
Theorem~\ref{thm:one} proves the existence of a function $f_1$ in $\PWpio$ for which the reconstruction addressed there 
diverges for \emph{any} subsequence. 
Conjecture~\ref{conj:four} claims that completely different behavior holds for $\Ho$:  
if the conjecture holds, then for every $f\in\Ho$ there exists a subsequence yielding boundedness.
\end{remark}

\begin{proof}[Proof of Theorem~\ref{thm:four}]
We construct a sequence of functions by setting $q_N(k)=(-1)^kw_N(k)$ for $k\in\Z$, where $w_N$ 
is the function defined in~\eqref{eq:w}. We further set
\begin{equation}
q_N(t)=\sum_{k=-2N+1}^{2N-1} (-1)^kw_N(k)\sinpitmk.
\end{equation} 
Let $\{N_l\}_{l\in\N}$ be the arbitrary subsequence of the natural numbers for which we will show divergence of the reconstruction.  
For $l\in\N$ we define $\Nol$ to be the largest natural number satisfying $3\Nol\leq \Nl$. 
Now we set
\begin{equation*}
q^1_{N_l}(t)=q_{\Nol}(t-\Nl+\Nol),
\end{equation*} 
so that 
\begin{equation*}
\qh^1_{N_l}(\omega)=e^{i(\Nl-\Nol)\omega}\qh_{\Nol}(\omega),\;\;\omega\in[-\pi,\pi). 
\end{equation*}
Just as in the proof of Theorem~\ref{thm:one}, we have 
\begin{equation*}
\|q^1_{N_l}\|_{\Ho}=\|q_{\Nol}\|_{\PWpio}<3.
\end{equation*} 
We now select a subsequence $\{l_r\}_{r\in\N}$ such that 
\begin{itemize}
\item[i.)] $\frac{1}{r^2}\opi \log(\Nlr-\frac{3}{2})\geq r$
\item[]
\item[ii.)] $N^1_{l_{r+1}}\geq (\Nlr)^2$
\end{itemize}
\noindent for all $r$ 
and define 
\begin{equation*}
\fo(t)=\sumroi \frac{1}{r^2}q^1_{\Nlr}(t),
\end{equation*}
for which we have 
\begin{equation*}
\|\fo\|_{\Ho}\leq \sumroi \frac{1}{r^2}\|q^1_{\Nlr}\|_{\Ho}<3\frac{\pi^2}{4}.
\end{equation*}
For $m\in\N$ we set $t_m=\Nlm+\oh$, for which 
\begin{eqnarray}
\sumNlm \fo(k)\sintmmk
&=& \sum_{r=1}^{m-1} \frac{1}{r^2} \sumNlm \qo_{\Nlr}(k)\sintmmk \label{eq:first}\\
&&+ \frac{1}{m^2}\sumNlm \qo_{N_{l_m}}(k)\sintmmk\label{eq:second} \\
&&+\sum_{r=m+1}^\infty \frac{1}{r^2}\sumNlm \qo_{\Nlr}(k)\sintmmk.\label{eq:third}
\end{eqnarray} 
For $r\leq m-1$ we have 
\begin{equation*}
\sum_{k=0}^{N_{l_m}} q^1_{N_{l_r}}(k)\sintmmk =q^1_{N_{l_r}}(t_m).  
\end{equation*}
Using this identity and the bound 
\begin{equation*}
|q^1_{N_{l_r}}(t)|\leq \|q^1_{N_{l_r}}\|_{\PWpio}<3
\end{equation*}
for all $t\in \R$, 
for the term on the right in~\eqref{eq:first} we have 
\begin{equation}\label{eq:newA}
\left|\sum_{r=1}^{m-1}\sumNlm\frac{1}{r^2}  \qo_{\Nlr}(k)\sintmmk \right|\leq \sum_{r=1}^{m-1}\frac{1}{r^2}| q_{N^1_{l_r}}(t_m-N_{l_r}+N^1_{l_r})|<3\frac{\pi^2}{4}. 
\end{equation} 
For the term~\eqref{eq:second} we have 
\begin{eqnarray}
\left|\frac{1}{m^2}\sumNlm \qo_{\Nlm}(k)\sintmmk \right|
&=& \left|\frac{1}{\pi m^2}\sumNlm (-1)^{k-\Nlm+\Nlm^1}w_{\Nlm^1}(k-\Nlm+\Nlm^1)\frac{(-1)^k\sin \pi t_m}{\Nlm+\oh-k}  \right|\nonumber\\
&=&\frac{1}{\pi m^2}\sumNlm w_{\Nlm^1}(k-\Nlm+\Nlm^1)\frac{1}{\Nlm+\oh-k}\nonumber\\
&>& \frac{1}{\pi m^2}\sum_{k=\Nlm-2 N^1_{l_m}}^{\Nlm} \frac{1}{\Nlm+\oh-k}= \frac{1}{\pi m^2}\sum_{k=1}^{2 N^1_{l_m}} \frac{1}{k+\oh}\nonumber\\
&=&\frac{1}{\pi m^2}\log \left(3 N^1_{l_m}+\frac{3}{2}\right)>\frac{1}{\pi m^2}\log\left(\Nlm-\frac{3}{2}\right)\nonumber\\
&>&m, \label{eq:newB}
\end{eqnarray}
where the last inequality holds because $3 N^{1}_{l_m}+3>\Nlm$. 
To address~\eqref{eq:third} we note that for $r>m$ we have $N^1_{l_{r+1}}>(\Nlm)^2$, so  that 
\begin{eqnarray}
\left|\sumNlm q^1_{\Nlr}(k)\sinNlm  \right| &\leq& \sumNlm|q^1_{\Nlr}(k)|\frac{1}{\pi(\Nlm+\oh-k)}\nonumber\\
&\leq & \frac{1}{\pi \No_{l_r}}\sumNlm \frac{k}{\Nlm+\oh-k}= \frac{1}{\pi \No_{l_r}}\sumNlm \frac{\Nlm-k}{k+\oh}\label{eq:beachte}\\
&<& \opi \frac{\Nlm}{\No_{l_r}}\log \Nlm<\opi  \frac{(\Nlm)^2}{\No_{l_r}}<\frac{1}{\pi}.\label{eq:newC}
\end{eqnarray}
For the first inequality in~\eqref{eq:beachte} it is important to note that if $r>m$ then  
$0\leq k\leq \Nlm-2\No_{l_m}\leq \Nlr-2\No_{l_r}$ yields   
$\frac{k}{\Nlr-2\No_{l_r}}\geq |q_{\Nlr}(k)|$ and 
$\Nlr-2\No_{l_r}\geq \No_{l_r}$, so $|q_{\Nlr}(k)|\leq \frac{k}{\No_{l_r}}$. 

Combining~\eqref{eq:newA},~\eqref{eq:newB} and~\eqref{eq:newC} we obtain
\begin{equation}
\left|\sumNlm f_1(k)\sintmmk \right|\geq m-\frac{\pi}{4}-3\sum_{r=1}^{m-1}\frac{1}{r^2}
>m-\frac{\pi}{4}-\frac{3\pi^2}{4},
\end{equation}
from which the result follows. 
\end{proof}

\begin{remark}
Theorem~\ref{thm:four} can be used to show that known results on the maximal operator are sharp. 
\end{remark}

To address the convergence in~\eqref{eq:Zygmund} 
for a given subsequence $\{N_l\}_{l\in\N}$, 
we use the \emph{maximal operator} 
\begin{equation}
(M_*f)(\omega):=\sup_{l\geq 1}\left|\sum_{k=0}^{N_l}f(k)\eiok\right|,
\end{equation}
as developed in~\cite{Ste61}. 
Note that the operator depends on the sequence $\{N_l\}_{l\in\N}$, and so the constants in what follows 
depend as well on the sequence. 
If $\{N_l\}_{l\in\N}$ is a sequence of natural numbers satisfying~\eqref{eq:lacunary}, 
then for every $\mu\in (0,1)$ there exists a constant $C_\mu$, so that for all $f\in\Ho$
\begin{equation}\label{eq:mu}
\left(\frac{1}{2\pi}\int_{\pi}^\pi |(M_*f)(\omega)|^\mu d\omega\right)^{1/\mu}\leq C_\mu \|f\|_{\Ho}.
\end{equation}
The convergence~\eqref{eq:Zygmund} is a direct consequence of~\eqref{eq:mu}. 
Moreover, Theorem~\ref{thm:four} immediately implies that for every sequence satisfying~\eqref{eq:lacunary} 
there exists a function $f_1$ so that 
\begin{equation}
\frac{1}{2\pi}\int_{-\pi}^\pi |(M_*f_1)(\omega)| d\omega =+\infty. 
\end{equation}
Before justifying this statement we point out that it implies the necessity in~\eqref{eq:mu} of $\mu\in(0,1)$. 
The claim now follows from Theorem~\ref{thm:four} and the following calculation:
\begin{eqnarray*}
\max_{t\in\R}\left| \sum_{k=0}^{N_l}\fo(k)\sinpitmk \right| 
&=&\max_{t\in\R} \left|\frac{1}{2\pi} \int_{-\pi}^\pi \sum_{k=0}^{N_l}\fo(k)e^{-i\omega k}e^{i\omega t}d\omega\right|\\
&\leq & \frac{1}{2\pi}\int_{-\pi}^\pi \left|\sum_{k=0}^{N_l}\fo(k)\eiok\right|d\omega\\
&\leq & \frac{1}{2\pi}\int_{-\pi}^\pi |(M_*\fo)(\omega)| d\omega.
\end{eqnarray*}

\section{Reconstruction for Functions of Sine Type}\label{sec:two}
 
Given a function $f\in\PWpio$ of sine type (Definition~\ref{def:sinetype}), 
one defines the invertible transformation
\begin{equation}
F(z)=f(z)-A \sin(\pi z),\;\;\;z\in\C\label{eq:F}
\end{equation}
for a constant $A>\|f\|_{\infty}$. 
Examples of the use of functions of this type in communications engineering include~\cite{MC81,Bar74,Piw83}. 
We say that $F$ is the function determined by the sine wave crossings of $f$. 
But more generally, one can generate a function using the zeros of $F$ to 
reconstruct  all functions in $\PWpit$ from their samples as follows.  
If $F$ is of sine type and  $F$ has zeros $\{\tk\}_{k\in\Z}$, then the function 
\begin{equation}\label{eq:phi}
\phi(z)=z\lim_{R\rightarrow\infty}\prod_{|\tk|\leq R,\tk\neq 0}\left(1-\frac{z}{\tk}\right)
\end{equation}
converges uniformly on $|z|\leq R_1$ for all $R_1<\infty$, and $\phi$ is also an entire function of 
type $\pi$~\cite{Lev96}. 
One further defines, for $k\in\Z$, 
\begin{equation*}
\phi_k(z)=\phiz. 
\end{equation*}
The sequence $\{\phi_k\}_{k\in\Z}$  is known as the interpolating sequence, and, 
as just constructed, it is a Riesz basis for $\PWpit$,~\cite{Lev96}. 
Thus, given this construction, for all 
$g\in\PWpit$
\begin{equation*}
\lim_{N\rightarrow\infty}\left\| g-\sumbiN g(\tk)\phi_k\right\|_{\PWpit}=0.
\end{equation*}
Here we will address the pointwise convergence of such expansions for functions in $\PWpio$.  
The following theorem of Hryniv and Mykytyuk will be essential for our work with functions of 
sine type. 
Their beautiful  theorem provides a correspondence between certain zero sequences and elements of $\PWpio$. 

\begin{theorem}[Hryniv and Mykytyuk~\cite{HM09}]\label{thm:HM}

$\;\;$ \newline
\noindent Let $A>0$ be an arbitrary constant. 
\begin{itemize}
\item[i)] Assume $w\in\PWpio$, set $\tk=k+w(k)$ for $k\in\Z$ and assume $\tk\neq t_l$ for $k\neq l$. 
Then there exists a function $f\in\PWpio$, $\|f\|_\infty<A$,  such that $F$ has the zero sequence $\{\tk\}_{k\in\Z}$. 
\item[ii)] If $f\in\PWpio$ and $\|f\|_{\infty}<A$, then there exists a function $w\in \PWpio$ such that 
$\{\tk\}_{k\in\Z}=\{k+w(k)\}_{k\in\Z}$ is the sequence of zeros of $F$, where $F$ is the function defined by~\eqref{eq:F}. 
\end{itemize}
\end{theorem}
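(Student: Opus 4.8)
We propose the following approach to a proof of Theorem~\ref{thm:HM}. The statement splits into a ``direct'' half ii) and an ``inverse'' half i), linked by the transform $F=f-A\sin(\pi z)$ of~\eqref{eq:F}; I would establish ii) first, since it fixes the analytic picture, and then i), which carries the real weight. A fully self-contained argument would essentially recapitulate the inverse-spectral construction of~\cite{HM09}, so below I indicate the main steps and where I expect the difficulty to concentrate.

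\textbf{Part ii).} Since $f\in\PWpio$, its Fourier transform $\fh$ lies in $L^1[-\pi,\pi]$; hence $f$ is entire of exponential type at most $\pi$, bounded on $\R$, and $f(x+iy)\to 0$ as $|x|\to\infty$ uniformly for $y$ in any bounded interval (the family $\{\fh(\omega)e^{-\omega y}:|y|\le Y\}$ is compact in $L^1[-\pi,\pi]$, so the Riemann--Lebesgue decay is uniform over it). On each circle $|z-k|=\tfrac12$ one has $|A\sin(\pi z)|=A\,|\sin\pi(z-k)|\ge c_0>0$ with $c_0$ independent of $k$, so for all large $|k|$ one has $|f(z)|<|A\sin(\pi z)|$ there, and Rouch\'e's theorem yields exactly one, necessarily simple, zero $t_k$ of $F$ in that disc. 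For the finitely many remaining integers I would run a global zero count on a large rectangle: $F$ and $\sin(\pi z)$ have exponential type exactly $\pi$ with the same indicator diagram, so their zero-counting functions agree up to $O(1)$; this forces $F$ to have exactly one zero near each integer and none elsewhere, and one sets $w(k):=t_k-k$, which tends to $0$. It then remains to recognize $\{w(k)\}_{k\in\Z}$ as the sample sequence of a $\PWpio$ function. For this I would compare two expressions for the logarithmic derivative of $F$: the Hadamard factorization gives $F'/F(z)-(\sin\pi z)'/\sin(\pi z)=\sum_k\!\big(\tfrac1{z-t_k}-\tfrac1{z-k}\big)+\text{const}$, while directly $F'/F(z)-(\sin\pi z)'/\sin(\pi z)=\big(\log(1-f/(A\sin(\pi z)))\big)'$; equating and integrating identifies the generating function of $\{w(k)\}$ in terms of $f$ up to harmless periodic and polynomial factors, and $\fh\in L^1$ then propagates to give $w\in\PWpio$.

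\textbf{Part i).} Given $w\in\PWpio$ with the points $t_k=k+w(k)$ distinct, I would form the canonical product $\phi(z)=\lim_{R\to\infty}\prod_{|t_k|\le R}(1-z/t_k)$ as in~\eqref{eq:phi} (pairing $\pm k$; $w(k)\to 0$ makes it converge comparably to $\sin(\pi z)$), and verify by Levin-type estimates that $\phi$ is entire of type $\pi$ with indicator $\pi|\sin\theta|$ and with well-defined nonzero asymptotic coefficients $\phi(z)e^{\pm i\pi z}\to\beta_\pm$ as $\operatorname{Im} z\to\pm\infty$. Because any entire function of type at most $\pi$ whose Fourier transform is supported in $[-\pi,\pi]$ and which vanishes exactly on $\{t_k\}$ must be a constant multiple of $\phi$, I would put $f:=A\sin(\pi z)-c\,\phi$, the constant $c$ chosen so that $c\,\phi$ cancels the leading behaviour of $A\sin(\pi z)$ at $\pm i\infty$ (that one constant does both is part of what the $\PWpio$-hypothesis must guarantee), so that $F=f-A\sin(\pi z)=-c\,\phi$ has zero sequence $\{t_k\}$. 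Two things then have to be verified: that $f\in\PWpio$, i.e.\ $\fh\in L^1[-\pi,\pi]$, which I would obtain by pushing the $\PWpio$-regularity of $w$ through the quantitative form of the logarithmic-derivative comparison of Part ii); and that $\|f\|_\infty<A$.

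\textbf{Main obstacle.} I expect the crux of the whole argument to be the strict inequality $\|f\|_\infty<A$, together with the sharp comparison of $|\phi(x+iy)|$ with $|\sin\pi(x+iy)|$ for bounded $y$ on which it rests: one must control, quantitatively in terms of $w$, how far $c\,\phi$ departs from $A\sin(\pi z)$ on the \emph{whole} real line, not merely that the difference is a bounded function of type $\pi$. This is where the $\PWpio$-structure of $w$ is decisive and where, I believe, one cannot avoid the de Branges / inverse-Sturm--Liouville machinery of~\cite{HM09} --- a continuity argument degenerating to the model case $w\equiv 0$, or an a priori bound on the perturbation norm, being the natural devices. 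By comparison, the Rouch\'e localization and the $O(1)$ zero count of Part ii), and the bookkeeping in the logarithmic-derivative comparison once the key estimate is available, are comparatively routine.
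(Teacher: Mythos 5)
The first thing to say is that the paper contains no proof of Theorem~\ref{thm:HM}: it is quoted verbatim from Hryniv and Mykytyuk~\cite{HM09} and used as a black box, so there is nothing internal to compare your proposal against. Judged on its own, your proposal is an accurate map of where the difficulty sits, but it is an outline rather than a proof. The portions you actually carry out are sound: the uniform decay of $f(x+iy)$ on horizontal strips via Riemann--Lebesgue, the Rouch\'e localization of one (hence simple) zero of $F$ in each disc $|z-k|<\tfrac{1}{2}$ for all large $|k|$, and the global count on a large rectangle. (For the rectangle, note that on the horizontal edges the crude bound $|f(x+iy)|\leq\|f\|_\infty e^{\pi|y|}$ is not quite enough, since $e^{\pi|y|}/\sinh(\pi|y|)\to 2$; you need $|f(x+iy)|=o(e^{\pi|y|})$, which follows from $\fh\in L^1$ having vanishing mass near $\pm\pi$ in the limit. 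Also, the bare statement that equal indicator diagrams force zero-counting functions to agree up to $O(1)$ gives only density; the exact count comes from the rectangle Rouch\'e argument itself.)

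The genuine gaps are exactly the two assertions that constitute the content of the theorem, and your sketch assumes rather than proves them. In ii), the claim that the logarithmic-derivative comparison ``identifies the generating function of $\{w(k)\}$'' and that ``$\fh\in L^1$ then propagates to give $w\in\PWpio$'' is the theorem: the map $f\mapsto w$ is nonlinear, and membership of a bounded decaying sequence in the set of integer samples of a $\PWpio$ function is a delicate structural property not implied by any decay or $\ell^p$ condition, so nothing in the bookkeeping you describe delivers it. In i), both items you list as ``to be verified'' --- that $f=A\sin(\pi z)-c\,\phi$ has $\fh\in L^1[-\pi,\pi]$, and that $\|f\|_\infty<A$ --- are again the theorem itself, and you explicitly defer them to the inverse-spectral machinery of~\cite{HM09}. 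That deference is consistent with how the present paper uses the result, but as a proof attempt it means the essential step is cited, not supplied: the soft facts (type $\pi$, indicator $\pi|\sin\theta|$, uniqueness of $\phi$ up to a constant) cannot by themselves yield the strict pointwise inequality $\|f\|_\infty<A$ on all of $\R$ or the integrability of $\fh$, which require the quantitative two-sided comparison of $|\phi|$ with $|\sin\pi z|$ controlled by $\|w\|_{\PWpio}$.
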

\noindent With this result we prove the following theorem. 
\begin{theorem}\label{thm:two}
Let $f\in\PWpio$, $\|f\|_{\PWpio}<1$ be arbitrary. Let $\{t_k\}_{k\in \Z}$ be the zeros of the function $F$ 
determined by the sine wave crossings, and let $\{\phi_k\}_{k\in\Z}$ be the corresponding 
interpolating functions. 
There exists a function $\fo\in \PWpio$ such that 
\begin{equation*}
\lim_{N\rightarrow \infty}\left(\max_{t\in \R}\left|\sumbiN \fo(t_k)\phik(t)\right|\right)=+\infty.\label{eq:two}
\end{equation*}
\end{theorem}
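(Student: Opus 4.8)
The plan is to transcribe the proof of Theorem~\ref{thm:one}, replacing the Shannon building blocks $\wN$ by building blocks adapted to the interpolating system $\{\phi_k\}_{k\in\Z}$. First I would apply part ii) of Theorem~\ref{thm:HM} with a constant $A>\|f\|_\infty$ to obtain $w\in\PWpio$ with $t_k=k+w(k)$ for $k\in\Z$; put $W:=\|w\|_\infty<\infty$, and recall that the $t_k$ are separated. I also record the standard properties of the sine-type function $\phi$ of~\eqref{eq:phi} (see~\cite{Lev96}): $\phi$ is entire of exponential type $\pi$, real-valued on $\R$, with simple real zeros $\{t_k\}_{k\in\Z}$ listed in increasing order, so that $\phi_k(t_j)=\delta_{jk}$; there exist constants $0<c_0\le C_0$ with $c_0\le|\phi'(t_k)|\le C_0$ and $\operatorname{sgn}\phi'(t_k)=(-1)^k\epsilon$ for a fixed $\epsilon\in\{-1,1\}$; there is $C_1<\infty$ with $\sup_{t\in\R}|\phi(t)|\le C_1$; and there is $c_1>0$ such that for every $N$ one may pick $\tau_N\in(t_N,t_{N+1})$ with $|\phi(\tau_N)|\ge c_1$ (e.g. the midpoint of the interval).

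For each $N\in\N$ I would introduce the building block
\[
v_N(t):=\sum_{k=-2N+1}^{2N-1}(-1)^k w_N(k)\,\phi_k(t),
\]
where $w_N$ is the trapezoidal weight of~\eqref{eq:w}. As a finite combination of elements of $\PWpit\subseteq\PWpio$, $v_N$ belongs to $\PWpio$, and by biorthogonality $v_N(t_k)=(-1)^k w_N(k)$ for every $k\in\Z$. The crucial ingredient --- and the one genuinely new obstacle --- is the uniform bound
\[
\|v_N\|_{\PWpio}\le C\qquad\text{for all }N\in\N,
\]
with $C$ depending only on the sine-type constants of $\phi$. This is the sine-type counterpart of the bound $\|\wN\|_{\PWpio}<3$ of~\cite{BM10}: since $\|v_N\|_{\PWpio}=\tfrac1{2\pi}\|\widehat{v_N}\|_{L^1[-\pi,\pi]}$, it amounts to the uniform $L^1$-boundedness of the summation operator of de la Vall\'ee Poussin type attached to the non-harmonic system $\{t_k\}_{k\in\Z}$. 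I would prove it by comparison with the classical de la Vall\'ee Poussin kernel, using $t_k-k=w(k)$ and the two-sided bounds on $\phi$ and $\phi'(t_k)$ to control $\phi_k$ against $\sinpitmk$ --- noting, as in the harmonic case, that modulating the coefficients by $(-1)^k$ leaves the $\PWpio$ norm unaffected; the argument of~\cite{BM10} should adapt. The trivial estimate $\|v_N\|_{\PWpio}\le\|v_N\|_{\PWpit}$ is useless here because $\|v_N\|_{\PWpit}\asymp\sqrt N$, so honest cancellation has to be exploited, and this is where the bulk of the work lies.

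Granting the uniform bound, I would set $N_l=2^{l^3}$ and
\[
\fo(t):=\sum_{l=1}^\infty\frac1{l^2}\,v_{N_l}(t)\in\PWpio,\qquad \|\fo\|_{\PWpio}\le C\,\frac{\pi^2}{6}.
\]
Since $v\mapsto v(t_k)$ is a bounded functional on $\PWpio$, we get $\fo(t_k)=(-1)^k g(k)$, where $g(k)=\sum_{l}l^{-2}w_{N_l}(k)$ is precisely the coefficient sequence of Theorem~\ref{thm:one}; in particular $g(k)\ge0$ for all $k$, and $g(k)\ge(\hat k+1)^{-2}$ whenever $|k|\le N$ and $N\in[N_{\hat k},N_{\hat k+1}]$. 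Fixing such an $N$ and a point $\tau_N\in(t_N,t_{N+1})$ with $|\phi(\tau_N)|\ge c_1$, I would use $\phi_k(t)=\phi(t)/(\phi'(t_k)(t-t_k))$ to write
\[
\sum_{k=-N}^{N}\fo(t_k)\,\phi_k(\tau_N)=\phi(\tau_N)\sum_{k=-N}^{N}\frac{(-1)^k g(k)}{\phi'(t_k)\,(\tau_N-t_k)}.
\]
Because $\operatorname{sgn}\phi'(t_k)=(-1)^k\epsilon$, $\tau_N-t_k>0$ for all $|k|\le N$, and $g(k)\ge0$, every summand carries the same sign $\epsilon$; hence, using $|\phi'(t_k)|\le C_0$, the estimate $\tau_N-t_k\le(N-k+1)(1+2W)$ (from $t_{j+1}-t_j\le1+2W$), and $g(k)\ge(\hat k+1)^{-2}$,
\[
\Bigl|\sum_{k=-N}^{N}\fo(t_k)\,\phi_k(\tau_N)\Bigr|
\;\ge\; \frac{c_1}{C_0}\sum_{k=-N}^{N}\frac{g(k)}{\tau_N-t_k}
\;\ge\; \frac{c_1}{C_0\,(1+2W)\,(\hat k+1)^2}\sum_{j=1}^{2N+1}\frac1{j}
\;\ge\; \frac{c_1\,\log(2N_{\hat k})}{C_0\,(1+2W)\,(\hat k+1)^2}.
\]
Since $\log N_{\hat k}=\hat k^3\log 2$, the right-hand side tends to $+\infty$ as $N\to\infty$ (equivalently $\hat k\to\infty$), so $\max_{t\in\R}\bigl|\sum_{k=-N}^{N}\fo(t_k)\phi_k(t)\bigr|\to+\infty$, which is the assertion. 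Every step here except the uniform $\PWpio$-bound on the building blocks $v_N$ is a direct adaptation of the proof of Theorem~\ref{thm:one}; that bound is the part I expect to require the most care, and it is where entirely new estimates (of de la Vall\'ee Poussin type for sine-type systems) must enter.
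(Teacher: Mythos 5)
Your reduction to the divergence estimate is sound, but the construction rests on an unproven lemma that is precisely the hard part, and the paper takes a different route specifically to avoid having to prove it. You build the blocks $v_N=\sum_k(-1)^kw_N(k)\phi_k$ directly in the non-harmonic system, so that biorthogonality gives you the exact values $v_N(t_k)=(-1)^kw_N(k)$, and you then need the uniform bound $\|v_N\|_{\PWpio}\leq C$. That bound is a statement about uniform $L^1$-boundedness of de la Vall\'ee Poussin--type means for the exponential system $\{e^{it_k\omega}\}$ attached to an arbitrary sine-type zero set; it is not in~\cite{BM10} (which treats only the harmonic case $t_k=k$), you give no argument for it beyond the hope that the harmonic proof ``should adapt,'' and since $\hat\phi_k(\omega)$ is not a pure exponential the cancellation you need is not a routine perturbation of the classical kernel computation. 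As written, the proof is therefore incomplete at its load-bearing step.

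The paper circumvents this entirely. It keeps the \emph{classical} blocks $w_{\hat N_k}$ (where $\|w_N\|_{\PWpio}<3$ is already known), forms $g_1=C(0)w_{N_1}+\sum_{k\ge2}2^{-(k-1)}w_{\hat N_k}$, modulates to get $f_1$ with $f_1(n)=(-1)^ng_1(n)$, and then confronts the fact that the samples are taken at $t_n=n+g(n)$ rather than at $n$. The mismatch is controlled by the Bernstein-type estimate $|f_1(t_n)-f_1(n)|\le\pi\|f_1\|_{\PWpio}|t_n-n|\le4\pi C(0)C(n)$, where $C(n)=\max_{|k|\ge n}|g(k)|\to0$; and --- this is the essential new idea --- the scales $\hat N_k$ are chosen \emph{adaptively relative to the decay of} $C(n)$, so that the coefficient $g_1(n)\ge2^{-(k_n+1)}\ge8\pi C(0)C(n)$ always dominates the perturbation. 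Hence $f_1(t_n)=(-1)^nd_n$ with $d_n\ge g_1(n)/2>0$, the sign pattern survives at the perturbed nodes, and the rest proceeds as in your sketch (sign alignment of $\phi'(t_k)$, lower bound on $|\phi|$ at the midpoints $\hat t_N$, separation, logarithmic divergence). Note that the paper's counterexample is consequently tailored to the given node sequence through $C(n)$, whereas your $f_1$ is built from the universal scales $N_l=2^{l^3}$; that universality is exactly what forces you to pay for the uniform non-harmonic kernel bound. Either supply a proof of that bound or switch to the perturbation argument.
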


\begin{proof}
Let $g$ be the uniquely determined function in $\PWpio$ satisfying 
\begin{equation*}
t_n=n+g(n)\;\;\;\textnormal{for all}\;\;n\in \N.
\end{equation*}
By the Riemann-Lebesque Lemma, 
\begin{equation*}
\lim_{n\rightarrow\infty}g(n)=0.
\end{equation*}
We set 
\begin{equation*}
C(0)=\max(1,\max_{k}|g(k)|)
\end{equation*}
and  define a sequence of numbers by 
\begin{equation*}
C(n)=\max_{|k|\geq n}|g(k)|
\end{equation*}
for $n\in\N$, $n>1$. 
The sequence $\{C(n)\}_{n=1}^\infty$ is monotonically decreasing and 
\begin{equation*}
\lim_{n\rightarrow \infty}C(n)=0, 
\end{equation*}
as well as 
\begin{equation*}
C(n)\geq |g(k)|\;\;\;\textnormal{for all}\;\;|k|\geq n. 
\end{equation*}
We will first construct a function $\go\in\PWpio$ and in turn the function $\fo$ that will yield our claim. 
%We first set
%\begin{equation*}
%q_{1,N}(t)=C(0)w_N(t).
%\end{equation*}
Now let $\Nh_1$ denote the smallest positive integer satisfying 
\begin{equation*}
4\cdot 2\pi C(0)C(\Nh_1)<1.
\end{equation*}
%Now we consider the function
%\begin{equation*}
%q_{2,N}=C(0)w_{\Nh_1}(t)+1 w_N(t).
%\end{equation*}
%Now let $\Nh_2$ be the smallest positive integer so that 
%\begin{equation*}
%3\cdot 2\pi C(0)C(\Nh_2)<\frac{1}{2}
%\end{equation*}
%and
%\begin{equation*}
%\frac{1}{\pi}\log \Nh_2>2,
%\end{equation*}
%and consider the function
%\begin{equation*}
%q_{3,N}(t)=q_{2,\Nh_2}(t)+\frac{1}{2}w_N(t). 
%\end{equation*}
If $\Nh_k$ has been defined, we let $\Nh_{k+1}$ be the smallest positive integer such that 
\begin{equation*}
4\cdot 2\pi C(0)C(\Nh_{k+1})<\frac{1}{2^{k+1}}
\end{equation*}
and
\begin{equation*}
\frac{1}{2^{k+1}}\log \Nh_{k+1}>k+1.
\end{equation*}
%and define $q_{k+1,\Nh_{k+1}}$ accordingly. 

%{\cob I think the $q$ functions are unnecessary.}

We now have that 
\begin{equation*}
\go(t)=C(0)w_{N_1}(t)+\sum_{k=2}^\infty\frac{1}{2^{k-1}}w_{\Nh_k}(t)
\end{equation*}
is in $\PWpio$. 
Now 
\begin{eqnarray*}
\|\go\|_{\PWpio}&<&3C(0)+3\sum_{k=2}^\infty\frac{1}{2^{k+1}}=3\left(C(0)+\sum_{k=2}^\infty\frac{1}{2^k}\right)\\
&=&3(C(0)+1)\leq 4C(0).
\end{eqnarray*}

We now set 
\begin{equation*}
\Go(\eio)=\sumbiinf \go(k)\eiok,\;\;\;\Fo(\eio)=\Go(e^{i(\omega+\pi)})\;\;\textnormal{for}\;\omega\in[-\pi,\pi)
\end{equation*}
and
\begin{equation*}
\fho(\omega)=\Fo(\eio),\;\;\textnormal{for}\;\;|\omega|\leq \pi.
\end{equation*}
Lastly, we have 
\begin{equation*}
\fo(t)=\otwopi \inttwopi \fho(\omega)e^{i\omega t}d\omega,\;\;\textnormal{for}\;t\in\R,
\end{equation*}
so that 
\begin{equation*}
\fo(k)=(-1)^k\go(k).
\end{equation*}
Now, 
\begin{eqnarray*}
|\fo(\tn-n)-\fo(n)|&\leq & \|f'_1\|_\infty|\tn-n|\leq  \pi\|\fo\|_{\PWpio}|\tn-n|\\
&<& 4\pi C(0)|\tn-n|\leq 4\pi C(0)|g(n)|\\
&\leq & 4\pi C(0)C(n),
\end{eqnarray*}
which gives us
\begin{equation*}
|\fo(\tn)-(-1)^n\go(n)|\leq 4\pi C(0)C(n).
\end{equation*}
For every $n$ there is a unique $k_{n}$ such that $|n|\in[\Nh_{k_n},\Nh_{k_n+1}]$. 
Thus, recalling that $g$ is even, 
\begin{equation*}
\go(n)\geq\frac{1}{2^{k_n+1}}\geq 4\cdot 2\pi C(0)C(n)\geq 4\cdot 2\pi C(0)|g(n)|,
\end{equation*}
so that 
\begin{equation*}
\fo(\tn)=(-1)^n\dn
\end{equation*}
for 
\begin{equation*}
\dn\geq \go(n)-4\pi C(0)C(n)\geq \go(n)-\frac{\go(n)}{2}=\frac{\go(n)}{2}.
\end{equation*}
Setting $\thN$ to be the midpoint of $(\tN,t_{N+1})$, we now have that for all $N>\Nh_1$

\begin{eqnarray}
\sumbiN \fo(t_k)\frac{\phi(\thN)}{\phi'(\tk)(\th_N-t_k)}
=\sum_{|k|\leq N_1}\fo(t_k)\frac{\phi(\thN)}{\phi'(\tk)(\th_N-t_k)}
+\sum_{N_1< |k|\leq N}\fo(t_k)\frac{\phi(\thN)}{\phi'(\tk)(\th_N-t_k)},\label{eq:twoterms}
\end{eqnarray}
and it will suffice to address the second term in ~\eqref{eq:twoterms}. 
We have 
\begin{eqnarray*}
\left|\sum_{N_1< |k|\leq N}\fo(t_k)\frac{\phi(\thN)}{\phi'(\tk)(\th_N-t_k)}\right| 
&=&|\phi(\thN)|\cdot \left| \sum_{N_1< |k|\leq N} (-1)^kd_k\frac{1}{(-1)^k\phi'(\tk)(\th_N-t_k)}  \right|\\
&=& |\phi(\thN)| \sum_{N_1< |k|\leq N} \frac{d_k}{|\phi'(\tk)|(\th_N-t_k)} 
\end{eqnarray*}
By Lemma 5 of~\cite{BM12}, there exists a constant $C_2>0$ such that $\inf_{N\in\Z}|\phi(N+\oh)|\geq C_2$. 
Since $\phi$ is of sine type, by page 164 of~\cite{Lev96} we have that either $\phi'(t_k)=(-1)^kc_k$ or 
$\phi'(t_k)=(-1)^{k+1}c_k$ for a sequence of positive constants $\{c_{k}\}_{k\in \Z}$ satisfying 
$\sup_{k\in\Z}c_k\leq C_3<\infty$ and $0<C_4\leq \inf_{k\in\Z}c_k$.  
Thus, 
\begin{eqnarray*}
\left|\sum_{N_1< |k|\leq N}\fo(t_k)\frac{\phi(\thN)}{\phi'(\tk)(\th_N-t_k)} \right|
&\geq & \frac{C_2}{C_3}\sum_{N_1< |k|\leq N} \frac{d_k}{(\th_N-t_k)} \\
&\geq & \frac{C_2}{2C_3}\sum_{N_1< |k|\leq N} \frac{\go(k)}{(\th_N-t_k)}.
\end{eqnarray*}

The separation property of the zeros of a sine type function, see p 163 of~\cite{Lev96}, 
we have that there exists a constant $\delta>0$ such that 
\begin{equation}
t_{k+1}-t_k>\delta \;\;\textnormal{ for all}\;\;k\in\Z.\label{eq:distdelta}
\end{equation} 
Using~\eqref{eq:distdelta} and writing 
\begin{equation*}
\th_N-\tk=\th_N-t_N+t_N-t_{N+1}+\ldots+t_{k+1}-\tk,
\end{equation*}
we note that $\th_N-\tk\geq \delta(N+\oh-k)$ for all $-N\leq k\leq N$. 
We also not that 
there exists a unique $\kt$ such that $N\in [\Nh_{\kt},\Nh_{\kt+1})$. 
Using $\go(k)\geq \frac{1}{2^{\kt+1}}$ we have
\begin{eqnarray}
\sum_{N_1< |k|\leq N} \frac{\go(k)}{\thN-\tk} 
&\geq & \frac{1}{\delta 2^{\kt+1}}\sum_{N_1< |k|\leq N} \frac{1}{N+\oh-k}\nonumber\\
&>& \kt-\log (N_1). \nonumber
\end{eqnarray}
Thus, 
\begin{equation*}
\max_{t\in\R}\left|\sum_{N_1\leq|k|\leq N}\fo(t_k)\frac{\phi(t)}{\phi'(\tk)(t-t_k)} \right|\geq \kt-\log N_1
\end{equation*}
so that, recalling that $N_1$ is fixed,  
\begin{equation*}
\lim_{N\rightarrow\infty}\max_{t\in\R}\left|\sumbiN\fo(t_k)\frac{\phi(t)}{\phi'(\tk)(t-t_k)} \right|=+\infty.
\end{equation*}
\end{proof}
%\begin{thebibliography}{99}

\section{Reconstruction of the Generating Function }\label{sec:three}

Theorem~\ref{thm:two} shows that when the interpolating sequence is generated by an arbitrary 
element of $\PWpio$, there exists another function in $\PWpio$ for which the reconstruction procedure 
diverges. 
Our final result addresses the case when one applies the reconstruction procedure to 
the function used to generate the interpolating sequence. 
Such a method is generally called reconstruction using sine wave crossings and has been used in communications 
engineering for several decades~\cite{MC81,Bar74,Piw83}. 
Our result for this reconstruction method is the strengthening of Theorem 5 in~\cite{BM12} 
from a $\limsup$ statement to a strong limit statement. 

We note again that there is no theorem in the spirit of the Banach-Steinhaus Theorem that we could apply 
to the operator addressed in Theorem~\ref{thm:three}. 
The proof relies immediately on the specific properties of the operator. 
\begin{theorem}\label{thm:three}
There exists a function $g\in\PWpio$, $\|g\|_{\PWpio}<\oh$ satisfying 
\begin{equation}\label{eq:thmthree}
\lim_{N\rightarrow \infty}\max_{t\in\R}\left|\sumbiN (\sin \pi \tk) \pk(t,g)\right|=+\infty,
\end{equation}
where $\{\phi_k(\cdot,g)\}_{k\in\Z}$ denotes the interpolating sequence generated by $g$. 
\end{theorem}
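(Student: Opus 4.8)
The plan is to mimic the construction in the proof of Theorem~\ref{thm:two}, but now feeding the reconstruction the \emph{values} $\sin\pi t_k$ of the generating function's sine wave rather than the samples of an independently chosen function. The key algebraic fact I would start from is that $\sin\pi t_k = \sin\pi(k+g(k)) = (-1)^k\sin\pi g(k)$, and since $g\in\PWpio$ has small norm, $\sin\pi g(k)$ is comparable to $\pi g(k)$ up to a cubic error. So the reconstruction coefficients behave like $(-1)^k\pi g(k)$, which is exactly the alternating-sign situation that made Theorem~\ref{thm:one} and Theorem~\ref{thm:two} work. The idea is therefore: choose $g$ so that $g(k)$ is (essentially) the bump-function sum $g(k)=\sum_{k}\frac{1}{2^{k-1}}w_{\hat N_k}(k)$ used before, with the lacunary scale $\hat N_k$ chosen to grow fast enough, so that the partial sums $\sum_{|k|\le N}\frac{g(k)}{|\phi'(t_k)|(\hat t_N-t_k)}$ of the reconstruction, evaluated at the midpoints $\hat t_N$ between consecutive zeros, pick up a $\log N$ contribution divided by the dyadic weight $2^{-(k_N+1)}$, yielding a lower bound of order $k_N-\mathrm{const}$ that tends to $+\infty$.

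The steps, in order, would be: (1) Fix a target sequence $\{g(k)\}$ of the bump-sum type with lacunary support scales $\hat N_k$, and note that by the same Fourier-coefficient construction as in Theorems~\ref{thm:one} and~\ref{thm:two} there is a genuine function $g\in\PWpio$ with $\|g\|_{\PWpio}<\frac12$ realizing these samples (the factor $\frac12$ forces the scales $\hat N_k$ to be pushed out, which is harmless). (2) Apply Theorem~\ref{thm:HM}(i) with $w=g$: provided the points $t_k=k+g(k)$ are distinct — which holds because $\|g'\|_\infty\le\pi\|g\|_{\PWpio}<\pi/2<1$, so $t\mapsto t+g(t)$ is strictly increasing hence injective on the integers — there is a sine-type function $F$ with zero set $\{t_k\}$, and from $F$ one builds $\phi$ and the interpolating sequence $\{\phi_k(\cdot,g)\}$ exactly as in~\eqref{eq:phi}. (3) Compute the reconstruction at $\hat t_N$ (midpoint of $(t_N,t_{N+1})$): using $\sin\pi t_k=(-1)^k\sin\pi g(k)$ and $\phi'(t_k)=\pm(-1)^k c_k$ with $0<C_4\le c_k\le C_3<\infty$ (Levin, p.~164, as quoted), the alternating signs cancel and one is left with a sum of the form $|\phi(\hat t_N)|\sum_{|k|\le N}\frac{\sin\pi g(k)}{|\phi'(t_k)|(\hat t_N-t_k)}$, all terms of one sign. (4) Bound below: $|\phi(\hat t_N)|\ge C_2$ by Lemma~5 of~\cite{BM12}, $\sin\pi g(k)\ge \frac{\pi}{2}g(k)$ for $k$ large (since $g(k)\to0$), $\hat t_N-t_k\ge\delta(N+\frac12-k)$ by the separation~\eqref{eq:distdelta}, and $g(k)\ge 2^{-(k_N+1)}$ on the relevant block; summing the harmonic tail gives $\gtrsim 2^{-(k_N+1)}\log N - (\text{fixed tail from the first block})$, which by the choice $2^{-(k+1)}\log\hat N_{k+1}>k+1$ is at least $k_N-\mathrm{const}\to+\infty$. (5) Take the max over $t\in\R$ to conclude.

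The main obstacle I anticipate is step (3)–(4): unlike in Theorem~\ref{thm:two}, the reconstruction here is applied to the specific sequence $\{\sin\pi t_k\}$ generated \emph{by the same} $g$, so I cannot choose the perturbing function and the reconstructed signal independently — the cancellation of signs and the size of the coefficients are both governed by one object. I need the coefficients $\sin\pi g(k)$ to be simultaneously (a) summable in a way that keeps everything in $\PWpio$ (automatic, since $\sin\pi g(k)$ is dominated by $\pi|g(k)|$ and $g\in\PWpio$) and (b) bounded below by the dyadic weights on the support blocks so the $\log$ survives. Reconciling these forces a careful choice of the scales $\hat N_k$ — this is where the constant $C(0),C(n)$-type bookkeeping from the proof of Theorem~\ref{thm:two} reappears, now also absorbing the $\sin\pi g(k)\approx\pi g(k)$ linearization error. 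A secondary technical point is verifying $t_k\ne t_l$ for $k\ne l$ so that Theorem~\ref{thm:HM}(i) applies; as noted this follows from $\|g\|_{\PWpio}<\frac12$, which is precisely why the theorem is stated with that normalization.
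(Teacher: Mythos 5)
Your proposal follows essentially the same route as the paper's proof: take a nonnegative lacunary bump-sum $g$ with $\|g\|_{\mathcal{PW}^1_{\pi}}<\frac{1}{2}$ (the paper reuses the function from Theorem~\ref{thm:one}, rescaled, rather than the dyadic weights of Theorem~\ref{thm:two}, but this is immaterial), exploit $\sin\pi t_k=(-1)^k\sin\pi g(k)$ together with $\phi'(t_k)=\pm(-1)^kc_k$ to obtain a one-signed sum at the midpoints $\hat{t}_N$, and bound it below by the divergent weighted harmonic sum via the separation of the $t_k$. The only slip is the assertion $\pi/2<1$ in your injectivity check, which is harmless since distinctness of the $t_k$ is immediate from $0\le g(k)\le\frac{1}{2}$, which places $t_k\in[k,k+\frac{1}{2}]$.
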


\begin{proof}
We have 
\begin{equation*}
\sin \pi\tk=\sin \pi(n+g(k)),
\end{equation*}
so that, retaining $\th_N$ as the midpoint of the interval $(t_N,t_{N+1})$, 
\begin{equation}
\left|\sumbiN\sin\pi\tk \frac{\phi(\th_N,g)}{\phi'(\tk,g)(\th_N-\tk)}\right|
= |\phi(\th_N,g)|\cdot \left|  \sumbiN  \frac{\phi(\th_N,g)}{|\phi'(\tk,g)|(\th_N-\tk)}\right|.
\end{equation}
%We will construct a function $g_2\in \PWpit$ ------is $2$ correct-----so that 
%\begin{equation*}
%0\leq g_2(k)<\oh\;\;\;\textnormal{for all} k\in\Z. 
%\end{equation*}
%This implies $\sin \pig_2(k)\geq g_2(k)\geq 0$ for all $k\in\Z$. 
We may begin with a $g\in\PWpio$ satisfying $\oh\geq g(k)\geq 0$ for all $k\in\Z$. 
We follow the construction used in the proof of Theorem~\ref{thm:one} and divide the function $g$ constructed there 
by a large enough constant to obtain $\|g\|_{\PWpio}\leq \oh$. 
We of course still have 
\begin{equation*}
\lim_{N\rightarrow\infty}\sumbiN \frac{g(k)}{N+\oh-k}=+\infty. 
\end{equation*}
Let $\phi$ be the generating function corresponding to the zeros $\tk=k+g(k)$, $k\in \Z$, and, since $\phi$ 
is of sine type, recall the bounds on $|\phi(t_N)|$ and $|\phi'(\tk)|$ 
described in the proof of Theorem~\ref{thm:two}. 
Note that $\phi$ depends on $g$, but if $g$ is fixed, then $\phi$ is as well.

With these facts we turn to
\begin{equation}
\left|\sumbiN \sin(\pi\tk(g))\frac{\phi(\th_N)}{\phi'(\tk)(\th_N-\tk)}  \right|.
\end{equation}
We have
\begin{eqnarray*}
\left|\phi(\th_N)\sumbiN \sin(\pi \tk(g))\frac{1}{\phi'(\tk)(\th_N-\tk)}  \right|
&\geq & C_1 \left|\sumbiN \frac{(-1)^{k}\sin\pi g(k)}{\phi'(\tk)(\th_N-\tk)}  \right|\\
&\geq&\frac{C_1}{C_2}\left|\sumbiN \frac{\sin\pi g(k)}{(\th_N-\tk)}  \right|,
\end{eqnarray*}
for appropriate constants $C_1$ and $C_2$. 
Note that here we have used that $0\leq \sin\pi g(k)\leq 1$. 
Using $\sin \pi g(k)\geq  g(k)>0$, and the separation property again as in~\eqref{eq:distdelta}, 
we have 
\begin{eqnarray*}
\sumbiN\frac{\sin \pi g(k)}{\th_N-\tk}&\geq& \sumbiN\frac{ \pi g(k)}{\th_N-\tk}\\
&\geq& \frac{\pi}{\delta}\sumbiN\frac{g(k)}{N+\oh-k}.
\end{eqnarray*}
We now use the exact same calculation as~\eqref{eq:kone} to~\eqref{eq:ktwo} to  
obtain a lower bound and finish the proof.
\end{proof}

\section{Behavior of Oscillations for Theorems~\ref{thm:two} and~\ref{thm:three}}\label{sec:diverge}

In this final section we show that the reconstruction sequences in Theorems~\ref{thm:two} and~\ref{thm:three} 
fluctuate very strongly: not only does the maximum absolute value diverge, but both the maximum and minimum 
diverge strongly to $+\infty$ and $-\infty$. 
Following the proof of this theorem, we close the section and the paper with a question. 

\begin{theorem}
Assume the hypotheses of Theorem~\ref{thm:two}. 
For the function $\fo$ that yields the strong divergence in that theorem, the following hold:
\begin{equation}\label{eq:divpos}
\lim_{N\rightarrow\infty}\left(\max_{t\in\R}\sumbiN \fo(\tk)\phi_k(t)\right)=+\infty
\end{equation}
and
\begin{equation}\label{eq:divneg}
\lim_{N\rightarrow\infty}\left(\min_{t\in\R}\sumbiN \fo(\tk)\phi_k(t)\right)=-\infty.
\end{equation}
For the function $g$ that yields divergence in Theorem~\ref{thm:three}, the behavior analogous to~\eqref{eq:divpos} 
and~\eqref{eq:divneg} holds for the reconstruction method of Theorem~\ref{thm:three}. 
\end{theorem}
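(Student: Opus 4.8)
The plan is to reuse, essentially verbatim, the estimates already obtained in the proofs of Theorems~\ref{thm:two} and~\ref{thm:three}, but to evaluate the reconstruction at \emph{two} nearby points whose dominant contributions carry opposite signs. I will describe the argument for the function $\fo$ of Theorem~\ref{thm:two}; the case of $g$ in Theorem~\ref{thm:three} is identical up to cosmetic changes. Write $h_N(t):=\sumbiN \fo(\tk)\phi_k(t)$. Recall from the proof of Theorem~\ref{thm:two} that $\fo(\tk)=(-1)^k d_k$ with $d_k>0$ bounded above, that $\phi'(\tk)=\varepsilon_1(-1)^kc_k$ for a fixed $\varepsilon_1\in\{-1,+1\}$ and constants $0<C_4\le c_k\le C_3$, and that (as in that proof) $|\phi|$ is bounded below by a positive constant $C_2$ at the midpoint of any gap between two consecutive zeros $\tk$. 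Consequently
\begin{equation*}
h_N(t)=\varepsilon_1\,\phi(t)\,\Psi_N(t),\qquad \Psi_N(t):=\sumbiN\frac{d_k/c_k}{\,t-\tk\,}.
\end{equation*}
Since $\phi$ is of sine type with simple real zeros $\{\tk\}$ listed increasingly, $\phi$ changes sign at every $\tk$, so the sign of $\phi$ on $(\tk,t_{k+1})$ equals $\varepsilon_0(-1)^k$ for a fixed $\varepsilon_0\in\{-1,+1\}$.

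First I would read off the value at $\thN$, the midpoint of $(\tN,t_{N+1})$. There every denominator $\thN-\tk$, $|k|\le N$, is positive, hence $\Psi_N(\thN)=:S_N>0$, and the proof of Theorem~\ref{thm:two}, applied at $t=\thN$, shows precisely that $S_N\to+\infty$. Thus $h_N(\thN)$ has the sign $\varepsilon_0\varepsilon_1(-1)^N=:\eta(-1)^N$, and $|h_N(\thN)|=|\phi(\thN)|\,S_N\ge C_2S_N$. Next I would evaluate at $u_N$, the midpoint of the \emph{adjacent} gap $(t_{N-1},\tN)$. Now $u_N-\tk>0$ for all $|k|\le N-1$ while $u_N-\tN<0$, so passing from $\thN$ to $u_N$ changes $\Psi_N$ only in the single term indexed by $k=N$:
\begin{equation*}
\Psi_N(u_N)=\sum_{|k|\le N-1}\frac{d_k/c_k}{\,u_N-\tk\,}-\frac{d_N/c_N}{\,\tN-u_N\,}\;\ge\;S_N-C,
\end{equation*}
since each positive term grows when $\thN$ is replaced by the closer point $u_N$, and $C$ absorbs $\tfrac{d_N/c_N}{\thN-\tN}$ and $\tfrac{d_N/c_N}{\tN-u_N}$, both bounded because $d_N$ and $1/c_N$ are bounded and $\thN-\tN,\;\tN-u_N\ge\delta/2$ by the separation~\eqref{eq:distdelta}. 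Hence $\Psi_N(u_N)>0$ with $\Psi_N(u_N)\to+\infty$, while $\phi(u_N)$ has sign $\varepsilon_0(-1)^{N-1}$, so $h_N(u_N)$ has sign $-\eta(-1)^N$ and $|h_N(u_N)|\ge C_2(S_N-C)$.

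These two evaluations finish the proof: for all large $N$ the numbers $h_N(\thN)$ and $h_N(u_N)$ have opposite signs and both have absolute value at least $C_2(S_N-C)$; whichever of them is positive witnesses $\max_{t\in\R}h_N(t)\ge C_2(S_N-C)$, and the other witnesses $\min_{t\in\R}h_N(t)\le-C_2(S_N-C)$. Letting $N\to\infty$ and using $S_N\to+\infty$ yields~\eqref{eq:divpos} and~\eqref{eq:divneg}. For the function $g$ of Theorem~\ref{thm:three} one repeats this with $d_k/c_k$ replaced by $\sin(\pi g(k))/c_k$, which is nonnegative and bounded by $1/C_4$, and with the divergence of $\sum_{|k|\le N}\frac{\sin\pi g(k)}{\thN-\tk}$ established in that proof playing the role of $S_N\to+\infty$; the lone term of $\Psi_N$ that switches sign at $u_N$ is still bounded because $|\sin\pi g(N)|\le 1$.

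I do not expect a genuine obstacle here: the crux is one sign-bookkeeping observation — $\phi$ reverses sign across the zero $\tN$ — so the very lower bound already proved in Theorem~\ref{thm:two} (resp.\ Theorem~\ref{thm:three}), now read at the midpoint of the left gap rather than the right gap, produces a value of comparable magnitude but of the opposite sign. The only spot needing a line of care is verifying that moving from $\thN$ to $u_N$ alters exactly one term of $\Psi_N$ and that this term stays bounded, so that the dominant, logarithmically growing part of the estimate is left untouched.
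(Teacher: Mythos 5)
Your argument is correct in substance, but it takes a genuinely different route from the paper. The paper proves a standalone lemma: for \emph{any} real-valued $f\in\PWpio$ with $\max_{t}\bigl|\sum_{|k|\le N}f(\tk)\phi_k(t)\bigr|\to\infty$, both the max and the min diverge. Its proof starts at the (unknown) point $\tNs$ where the absolute maximum is attained, moves to the extremum of $|\phi|$ within that same gap at the cost of a bounded error $C_7\|f\|_{\PWpio}$, and then jumps to the adjacent gap, using the sign flip of $\phi$ together with the comparability of $\max|\phi|$ over neighboring gaps to produce a value of comparable magnitude and opposite sign. You instead exploit the specific structure of the functions built in Theorems~\ref{thm:two} and~\ref{thm:three}: the divergence there was established at the explicit midpoints $\thN$, where (after factoring out $\varepsilon_1\phi$) the summands are essentially one-signed, so evaluating at the midpoint of the adjacent gap $(t_{N-1},t_N)$ changes only the single term $k=N$ (which is bounded by separation) while $\phi$ reverses sign. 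This is shorter and avoids both the ``where is the maximum attained'' issue and the adjacent-gap comparison constants, but it is strictly less general: it would not yield the paper's lemma for an arbitrary real-valued $f$ satisfying~\eqref{eq:divabs}, since it leans on the positivity of the coefficients $d_k$.

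One small point to patch: the positivity $d_k>0$ (equivalently $\fo(\tk)=(-1)^k d_k$ with $d_k\ge \go(k)/2$) is only established in the proof of Theorem~\ref{thm:two} for $|k|>\Nh_1$; for the finitely many indices $|k|\le \Nh_1$ the sign of $\fo(\tk)$ is not controlled, so $\Psi_N(\thN)$ is not literally a sum of positive terms and the monotonicity claim ``each positive term grows when $\thN$ is replaced by $u_N$'' does not apply to those indices. This is harmless --- those finitely many terms contribute $O(1)$ uniformly in $N$ at both $\thN$ and $u_N$ (their denominators are of size $\gtrsim \delta N$), so they can be absorbed into your constant $C$ exactly as the $k=N$ term is --- but the writeup should say so. In the Theorem~\ref{thm:three} case no such exceptional set occurs, since $\sin\pi g(k)\ge 0$ for all $k$.
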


\begin{proof}
The proof follows immediately from the following technical lemma. 
\end{proof}

\begin{lemma}
Assume that the generating function $\phi$ is of sine type and set 
$\phi_k(t)=\frac{\phi(t)}{\phi'(\tk)(\tk-t)}$ for $k\in\Z$. 
Suppose there exists a real-valued function $f\in\PWpio$ 
such that 
\begin{equation}\label{eq:divabs}
\lim_{N\rightarrow\infty}\max_{t\in\R} \left|\sumbiN  f(\tk)\phi_k(t)\right|=+\infty.
\end{equation}
Then
\begin{equation}\label{eq:divposa}
\lim_{N\rightarrow\infty}\left(\max_{t\in\R}\sumbiN f(\tk)\phi_k(t)\right)=+\infty
\end{equation}
and
\begin{equation}\label{eq:divnega}
\lim_{N\rightarrow\infty}\left(\min_{t\in\R}\sumbiN f(\tk)\phi_k(t)\right)=-\infty.
\end{equation}
\end{lemma}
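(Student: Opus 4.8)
The plan is to exploit the fact that the sampling series $S_N(t) := \sum_{k=-N}^N f(\tk)\phi_k(t)$ is, for each fixed $N$, an entire function of exponential type $\pi$ that is \emph{real-valued on $\R$}, since $f$ is real-valued, the zeros $\{\tk\}$ are real, and hence $\phi$ has real coefficients and each $\phi_k$ is real on $\R$. Thus $\max_{t\in\R} S_N(t)$ and $\min_{t\in\R} S_N(t)$ are genuine real numbers (attained or approached along $\R$), and $\max_{t\in\R}|S_N(t)| = \max\bigl(\max_{t} S_N(t),\, -\min_{t} S_N(t)\bigr)$. So~\eqref{eq:divabs} says that for every $M$, eventually \emph{at least one} of $\max_t S_N(t)$ and $-\min_t S_N(t)$ exceeds $M$. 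The work is to upgrade this "at least one, for each $N$" into "both, for all large $N$."

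The key step is a \emph{stability} or \emph{near-monotonicity} estimate: I would show that $\max_{t\in\R} S_{N+1}(t)$ cannot be much smaller than $\max_{t\in\R} S_N(t)$, and symmetrically for the min. Concretely, $S_{N+1}(t) - S_N(t) = f(t_{N+1})\phi_{N+1}(t) + f(t_{-(N+1)})\phi_{-(N+1)}(t)$, and since $f\in\PWpio$ implies $f(\tk)\to 0$ (Riemann–Lebesgue, as used for $g$ in the proof of Theorem~\ref{thm:two}), while $\|\phi_k\|_\infty$ is uniformly bounded (the $\phi_k$ form a Riesz basis built from a sine-type function, cf.\ p.~164 of~\cite{Lev96} and the bounds on $|\phi'(\tk)|$ invoked in the proof of Theorem~\ref{thm:two}), we get
\begin{equation*}
\bigl\|S_{N+1} - S_N\bigr\|_\infty \leq \bigl(|f(t_{N+1})| + |f(t_{-(N+1)})|\bigr)\sup_k \|\phi_k\|_\infty =: \epsilon_N \to 0.
\end{equation*}
Hence $\bigl|\max_t S_{N+1}(t) - \max_t S_N(t)\bigr| \leq \epsilon_N$ and likewise for the min. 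Summing, the sequences $\{\max_t S_N\}_N$ and $\{\min_t S_N\}_N$ have increments tending to zero.

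With this in hand the conclusion is essentially a real-analysis lemma: if $a_N := \max_t S_N(t)$ and $b_N := -\min_t S_N(t)$ satisfy $|a_{N+1}-a_N|\to 0$, $|b_{N+1}-b_N|\to 0$, $a_N,b_N\geq 0$ (note $a_N\geq 0$ and $b_N\geq 0$ always, since e.g. $S_N$ vanishes at its own sampling nodes where consecutive values have opposite signs, or more simply since $\max \geq$ average behavior forces a sign change — I will pin down the cleanest justification that $a_N,b_N\geq 0$, likely from $S_N(\tk)=(-1)^k d_k$ alternating in sign), and $\max(a_N,b_N)\to\infty$, then in fact $a_N\to\infty$ and $b_N\to\infty$ separately. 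The argument: suppose $a_N \not\to\infty$; then some subsequence has $a_{N_j}\leq C$. For the index range between a "bad" block where $a_N$ is large and the next point where $b_N$ must take over, the slow-increment condition forbids $a_N$ from dropping from large to bounded (or $b_N$ from doing so) without passing through all intermediate values slowly — so once, say, $a_N$ becomes large it can only decay at rate $\epsilon_N$, during which time $\max(a_N,b_N)$ is controlled, and a short interval-chaining/contradiction argument forces both to grow. I would state and prove this as a small standalone real-sequence lemma.

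The main obstacle is the real-sequence lemma itself: making precise why "$\max(a_N,b_N)\to\infty$ plus vanishing increments of each" yields "$\min(a_N,b_N)\to\infty$." The subtlety is that a priori $a_N$ could stay bounded while $b_N\to\infty$ carries the maximum; the vanishing-increment hypothesis alone does not immediately rule this out, so I expect to need an additional structural input — most likely that along a sequence of test points $\hat t_N$ (the interval midpoints used in Theorems~\ref{thm:two} and~\ref{thm:three}) one has a \emph{two-sided} lower bound, i.e.\ the divergence-producing sums $\sum g(k)/(\hat t_N - \tk)$ can be arranged (by shifting the test point by one gap, $\hat t_N \mapsto \hat t_N - (t_{N+1}-t_N)/1$, equivalently looking at $\hat t_{N}$ versus $\hat t_{N-1}$, where $\sin\pi(\cdot)$ changes sign) to be large and \emph{positive} for one choice and large and \emph{negative} for the neighboring choice. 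Since the increment bound shows $S_N$ and $S_{N+1}$ are uniformly close, the sign flip of the explicit lower bound between adjacent midpoints transfers, giving $a_N\to\infty$ from one family of test points and $b_N\to\infty$ from the shifted family. I would therefore reexamine the lower-bound computations~\eqref{eq:kone}–\eqref{eq:ktwo} (and their analogues in Theorems~\ref{thm:two},~\ref{thm:three}) to confirm that evaluating at $\hat t_N$ versus $\hat t_N$ shifted by one node toggles the sign of $\sin\pi(\cdot)$ while preserving the magnitude of the logarithmic lower bound, which makes both~\eqref{eq:divposa} and~\eqref{eq:divnega} fall out.
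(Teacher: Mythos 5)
Your first route --- the increment bound $\|S_{N+1}-S_N\|_\infty\to 0$ followed by a real-sequence lemma --- cannot be completed, and the obstruction is exactly the one you flag yourself: the statement ``$\max(a_N,b_N)\to\infty$ together with vanishing increments of $a_N$ and $b_N$ implies $a_N\to\infty$ and $b_N\to\infty$'' is simply false ($a_N\equiv 0$, $b_N=\log N$ is a counterexample), and no interval-chaining argument repairs it, because nothing couples the two sequences. So the entire burden falls on the ``additional structural input'' of your last paragraph. That input is indeed the right germ --- the sign flip of $\phi$ between adjacent nodal intervals --- but as you formulate it, it is to be extracted from the explicit lower-bound computations for the particular functions constructed in Theorems~\ref{thm:two} and~\ref{thm:three}. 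The lemma, however, is stated for an \emph{arbitrary} real-valued $f\in\PWpio$ satisfying~\eqref{eq:divabs}, with no access to explicit test points or logarithmic sums; a proof that inspects the construction of $f_1$ would prove the application but not the lemma as stated.

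The paper's proof makes your sign-flip idea work in general, for a \emph{fixed} $N$, with no recourse to the increment estimate. Writing $S_N(t):=\sumbiN f(\tk)\phi_k(t)$ as you do, let $\tNs$ be a point where $\max_t|S_N(t)|$ is attained, lying in some nodal interval $(t_{k(N)},t_{k(N)+1})$, and let $\tNh$ maximize $|\phi|$ there. Since every $\phi_k$ carries the common factor $\phi(t)$, the difference $S_N(\tNs)-S_N(\tNh)$ (after discarding the at most two terms indexed by the endpoints of that interval) reduces to $\sum_{k}\frac{f(\tk)}{\phi'(\tk)}\cdot\frac{\tNh-\tNs}{(\tNs-\tk)(\tNh-\tk)}$ times controlled $\phi$-factors; the separation of the nodes gives $|\tNs-\tk|,|\tNh-\tk|\gtrsim \deltab\,|k(N)-k|$, so this is bounded by a constant times $\|f\|_{\PWpio}\sum_j j^{-2}$. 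Hence $S_N(\tNh)$ equals $\pm\max_t|S_N(t)|$ up to an additive $O(\|f\|_{\PWpio})$. Repeating the same perturbation with the maximizer of $|\phi|$ in the \emph{adjacent} interval $(t_{k(N)+1},t_{k(N)+2})$, where $\phi$ --- and therefore every $\phi_k$ simultaneously --- has the opposite sign, produces a value of $S_N$ comparable to $-S_N(\tNh)$ up to constants depending only on $\phi$. Thus for the same $N$ one exhibits both a large positive and a large negative value of $S_N$, which is precisely the two-sided bound you were looking for; this quantitative perturbation estimate is the technical heart of the argument and is what is missing from your proposal.
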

\begin{proof}
First note that $f$ is assumed to be real-valued. 
For each $N$ we must consider two cases. 

Case 1.) For a fixed $N$ we assume 
\begin{equation}
\max_{t\in\R}\left|\sumbiN f(\tk)\phi_k(t)\right|=\sumbiN f(\tk)\phi_k(\tNs)\label{eq:caseone}
\end{equation}
for a value $\tNs\in\R$. 
We assume that $\tNs\in(t_{k(N)},t_{k(N)+1})$, and we define $\tNh$ by 
\begin{equation*}
\max_{t\in (t_{k(N)},t_{k(N)+1})}|\phi(t)|=|\phi(\tNh)|
\end{equation*}
%Subcase a.) We now first assume that $|\phi(\tNh)|=\phi(\tNh)$, i.e. that $\phi(t)\geq 0$ for $t\in(t_{k(N)},t_{k(N)+1})$. 
and set
\begin{equation*}
\IN=[-N,N]\backslash \{t_{k(N)},t_{k(N)+1}\}.
\end{equation*}
Recall that $\phi(t)$ has the same sign for all $t$ between any two neighboring zeros. 
We now have
\begin{eqnarray*}
A&:=&\frac{1}{\phi(\tNs)}\sumIN  f(\tk)\phik(\tNs)-\frac{1}{\phi(\tNs)}\sumIN  f(\tk)\phik(\tNh)\\
&=&\sumIN \frac{f(\tk)}{\phi'(\tk)}   \left(\frac{1}{\tNs-\tk}-\frac{1}{\tNh-\tk}\right)\\
&=&\sumIN \frac{f(\tk)}{\phi'(\tk)} \left(\frac{\tNh-\tNs}{(\tNs-\tk)(\tNh-\tk)}\right).
\end{eqnarray*}
As in the proof of Theorem~\ref{thm:three}, we again use the separation properties of $\{\tk\}_{k\in\Z}$, to 
obtain 
\begin{equation*}
0<\deltab\leq \inf_{k\in\Z}(t_{k+1}-\tk)\;\;\;\textnormal{and}\;\;\;\sup_{k\in\Z}(t_{k+1}-\tk)\leq \deltatop<\infty,
\end{equation*}
as well as the bound $\inf_k|\phi_k'(\tk)|\geq C_4>0$.
With these bounds we obtain, 
\begin{eqnarray*}
|A|&\leq&\sumIN \frac{|f(\tk)|}{|\phi'(\tk)|} \frac{|\tNh-\tNs|}{|\tNs-\tk|\cdot|\tNh-\tk|} \\
&\leq& \|f\|_{\PWpio}\frac{\deltatop}{C_4}\sumIN \frac{1}{|\tNs-\tk|\cdot|\tNh-\tk|} \\
&\leq&\|f\|_{\PWpio}\frac{2\deltatop}{C_4(\deltab)^2}\sum_{k=1}^\infty \frac{1}{k^2}\\
&=:&C_5\|f\|_{\PWpio}.
\end{eqnarray*}
Further, by defining $C_6<\infty$ so that 
\begin{equation*}
\sup_{k\in\Z}\max_{t\in (t_{k},t_{k+1})}|\phi(t)| = C_6,
\end{equation*}
which is again possible by the properties of sine type functions~\cite{Lev96}, 
we have
\begin{eqnarray*}
\left|\frac{\phi(\tNh)}{\phi(\tNs)}\left(\sumIN f(\tk)\phi_k(\tNs)-\sumIN f(\tk)\phi_k(\tNh)\right)\right|
\leq|\phi(\tNh)|\cdot C_5\|f\|_{\PWpio}\leq C_7  \|f\|_{\PWpio} 
\end{eqnarray*}
for $C_7=C_5\cdot C_6$. 
Since $\phi(\tNh)/\phi(\tNs)\geq 1$, 
we have 
\begin{equation*}
\left|\sumIN f(\tk)\phik(\tNh) - \sumIN f(\tk)\phik(\tNs)\right|\leq C_7\|f\|_{\PWpio},
\end{equation*}
and, hence, 
\begin{equation}
\sumIN f(\tk)\phik(\tNh)\geq \sumIN f(\tk)\phik(\tNs)-C_7  \|f\|_{\PWpio}. 
\end{equation}
Finally, for a further constant $C_8$ we have 
\begin{equation}\label{eq:up}
\sumbiN f(\tk)\phik(\tNh)\geq \sumbiN f(\tk)\phik(\tNs)-C_8  \|f\|_{\PWpio}.
\end{equation}

In the interval $(t_{k(N)+1},t_{k(N)+2})$ the function  $\phi(t)$ has the opposite sign as in the interval 
$(t_{k(N)},t_{k(N)+1})$. 
We combine this with the fact that there exist constants $C,c>0$ such that for all $k\in\Z$~\cite{You01}, 
\begin{equation*}
c \max_{t\in (t_{k-1},t_{k})}|\phi(t)|\leq \max_{t\in (t_{k},t_{k+1})}|\phi(t)|\leq C\max_{t\in (t_{k+1},t_{k+2})}|\phi(t)|.
\end{equation*}
With this fact, the calculation just given yields the existence of universal constants 
$C_9,C_{10}>0$, such that 
\begin{equation}\label{eq:down}
\sumbiN f(\tk)\phi(\th_{N+1})\leq -C_9\sumIN f(\tk)\phi(\tNh)+C_{10}\|f\|_{\PWpio}. 
\end{equation} 

Case 2. denotes the case when~\eqref{eq:caseone} holds with a negative sign before the term on the right, 
and this can be treated in exactly the same manner as Case 1. 
Thus, by combining~\eqref{eq:up} and~\eqref{eq:down} and both cases, we obtain that there exist 
constants $C_{11},C_{12}>0$ depending only on $\phi$ such that
\begin{equation*}
\max_{t\in\R} \sumbiN f(\tk)\phi(t) \geq C_{11} \max_{t\in\R}\left| \sumbiN f(\tk)\phi(t)\right|  -C_{12}\|f\|_{\PWpio}
\end{equation*}
and 
\begin{equation*}
\min_{t\in\R} \sumbiN f(\tk)\phi(t) \leq -C_{11}\max_{t\in\R}\left| \sumbiN f(\tk)\phi(t)\right|    +C_{12}\|f\|_{\PWpio}.
\end{equation*}
Since these inequalities hold for all $N$, the claim follows from the assumed divergence~\eqref{eq:divabs}. 
\end{proof}
\vspace{.5cm}

We now close by posing a question. 
\begin{question}
From Theorem~\ref{thm:one} we have that there exists a function $f_1\in\PWpio$ such that the Shannon series
\begin{equation*}
(S_N f)(t):=\sumbiN f(k)\frac{\sin \pi(t-k)}{\pi(t-k)}
\end{equation*}
diverges strongly. 
We also know, however, that
\begin{equation*}
\lim_{N\rightarrow\infty}\frac{\|S_Nf\|_{\infty}}{\log N}=0
\end{equation*}
for all $f\in\PWpio$. 
Thus, a natural question is the following: for what monotonically increasing positive functions 
$\psi:\N\rightarrow \R^+$ does there exist a function $f_1\in\PWpio$ such that 
\begin{equation*}
\lim_{N\rightarrow\infty}\frac{\|S_Nf_1\|_{\infty}}{\psi(N)}=+\infty?
\end{equation*}
Characterizing these functions is an interesting further direction to explore. 
\end{question}

\begin{center}
ACKNOWLEDGMENT
\end{center} 

\vspace{.1cm}
The authors thank Przemys\l{}aw Wojtaszczyk and Yuri Lyubarskii for valuable discussions of Conjecture~\ref{conj:complete} 
at the Strobl 2011 conference,   
and Ingrid Daubechies for valuable discussions of questions $Q1)$ and $Q2)$ at Strobl 2011 and at the 
``Applied Harmonic and Sparse Approximation'' workshop at Oberwolfach in 2012. 
The first author thanks Rudolf Mathar for his insistence in several conversations on the importance of 
understanding the strong divergence behavior addressed here. 
The authors also thank the referees of the  German Research Foundation (DFG) grant BO 1734/13-2 for highlighting these 
questions as well in their review.

H. Boche was supported by the German Research Foundation (DFG) through grant BO 1734/13-2. 
B. Farrell  was partially supported by Joel A. Tropp under ONR awards N00014-08-1-0883 
and N00014-11-1002 and a Sloan Research Fellowship.

\def\cprime{$'$} \def\cprime{$'$} \def\cprime{$'$}

%\bibliographystyle{abbrv}
%\bibliography{../Bibfiles/temp,../Bibfiles/math,../Bibfiles/RMT,../Bibfiles/ee,../Bibfiles/mathbook}

\end{document}